\documentclass[journal]{IEEEtran}

\usepackage{amsfonts}
\usepackage{amsmath}
\usepackage{amssymb}

\textheight8.5in 
\textwidth6.5in 
\oddsidemargin0in
\evensidemargin0in 
\topmargin0in
\parskip 8pt
\columnsep.4in 
\pagestyle{plain}

\newtheorem{theorem}{\indent Theorem}[section]
\newtheorem{lemma}[theorem]{\indent Lemma}
\newtheorem{proposition}[theorem]{\indent Proposition}
\newtheorem{corollary}[theorem]{\indent Corollary}
\newtheorem{EXAMPLE}{\indent Example}[section]

\newcommand{\graph}{{\mathcal{G}}}
\newcommand{\code}{{\mathcal{C}}}
\newcommand{\Code}{{\mathsf{C}}}
\newcommand{\Edge}{{\mathfrak{E}}}

\newcommand{\GF}{{\mathrm{GF}}}

\newcommand{\ff}{\mathbb{F}}

\newcommand{\ffq}{\mathbb{F}_q}
\newcommand{\half}{{\textstyle\frac{1}{2}}}
\newcommand{\thalf}{{\textstyle\frac{3}{2}}}
\newcommand{\fhalf}{{\textstyle\frac{5}{2}}}

\newcommand{\cH}{{\mathcal{H}}}

\newcommand{\cE}{{\mathcal{E}}}

\newcommand{\cP}{{\mathcal{P}}}
\newcommand{\cQ}{{\mathcal{Q}}}

\newcommand{\alp}{{\mathsf{a}}} 
\newcommand{\bet}{{\mathsf{b}}}

\newcommand{\dist}{{\mathsf{d}}}
\newcommand{\distance}{{\mathsf{d}}}

\newcommand\nn{{\mathbb N}}

\newcommand{\blda}{{\mbox{\boldmath $a$}}}
\newcommand{\bldaa}{{\mbox{\scriptsize \boldmath $a$}}}
\newcommand{\bldb}{{\mbox{\boldmath $b$}}}
\newcommand{\bldbb}{{\mbox{\scriptsize \boldmath $b$}}}
\newcommand{\bldc}{{\mbox{\boldmath $c$}}}

\newcommand{\bldd}{{\mbox{\boldmath $d$}}}
\newcommand{\blddd}{{\mbox{\scriptsize \boldmath $d$}}}

\newcommand{\bldf}{{\mbox{\boldmath $f$}}}

\newcommand{\bldgamma}{{\mbox{\boldmath $\gamma$}}}

\newcommand{\bldh}{{\mbox{\boldmath $h$}}}

\newcommand{\blds}{{\mbox{\boldmath $s$}}}

\newcommand{\bldtau}{{\mbox{\boldmath $\tau$}}}

\newcommand{\bldw}{{\mbox{\boldmath $w$}}}

\newcommand{\bldx}{{\mbox{\boldmath $x$}}}

\newcommand{\bldy}{{\mbox{\boldmath $y$}}}

\newcommand{\bldz}{{\mbox{\boldmath $z$}}}
\newcommand{\bldxi}{{\mbox{\boldmath $\xi$}}}
\newcommand{\bldXi}{{\mbox{\boldmath $\Xi$}}}

\newcommand{\alphaf}{{\mu_B}}
\newcommand{\betaf}{{\mu_A}}

\newcommand{\entropy}{\mathsf{h}}

\newcommand{\quarter}{{\textstyle\frac{1}{4}}}

\newcommand{\ost}{{\textstyle\frac{1}{16}}}

    \def\squarebox#1{\hbox to #1{\hfill\vbox to #1{\vfill}}}

\title{Correcting a Fraction of Errors in Nonbinary Expander Codes with Linear Programming%
    \thanks{
    This work was supported in part by the Claude Shannon Institute for
    Discrete Mathematics, Coding and Cryptography (Science Foundation Ireland Grant 06/MI/006), and in
    part by the National Research Foundation of Singapore (Research Grant NRF-CRP2-2007-03).
    Part of this work was presented at \emph{IEEE International Symposium on Information Theory 2009}, Seoul, Korea. 
    \newline
    V. Skachek was with the Claude Shannon Institute and the School of Mathematical Sciences, 
    University College Dublin, Belfield, Dublin 4, Ireland. 
    He is now with the Division of Mathematical Sciences, School of Physical and Mathematical Sciences,
    Nanyang Technological University, 21 Nanyang Link, Singapore 637371, 
    e-mail: {\tt Vitaly.Skachek@ntu.edu.sg}. 
    }
   }

\author{Vitaly Skachek,~\IEEEmembership{Member,~IEEE}}


\begin{document}

\maketitle

\begin{abstract}
A linear-programming decoder for \emph{nonbinary} 
expander codes is presented. It is shown that 
the proposed decoder has the nearest-neighbor 
certificate properties. It is also shown that 
this decoder corrects any pattern of errors 
of a relative weight up to approximately $\quarter \delta_A \delta_B$
(where $\delta_A$ and $\delta_B$ are the relative 
minimum distances of the constituent codes). 
\end{abstract}

\begin{keywords}
Expander Codes,
Low-Density Parity-Check Codes,
Linear-Programming Decoding,
Nonbinary Codes.
\end{keywords}

\section{Introduction}
\label{sec:introduction}

Low-density parity check (LDPC) codes have become very popular in recent years due to their excellent performance
under message-passing (MP) decoders. Yet, our understanding of LDPC codes and their decoders is still limited. 
While most of the research to date was devoted to binary LDPC codes, there are works suggesting 
that nonbinary LDPC codes combined with high-order modulation schemes can possibly 
outperform their binary counterparts (at a price of higher decoding complexity)~\cite{LSLG},~\cite{Deepak}. 

For a binary case, a new approach toward understanding of LDPC codes was suggested in~\cite{Feldman-thesis} and~\cite{Feldman}: it was proposed to decode binary LDPC codes using linear-programming (LP) decoder, and important connections between the linear-programming decoding and the message-passing decoding were established (see also~\cite{KV-IEEE-IT},~\cite{Wiberg}). 
In particular, it was shown that the events of LP decoder failures 
are caused by so-called \emph{pseudocodewords}, 
and those pseudocodewords are, in turn, related to the failure events 
of the message-passing decoders.  

These results were generalized in~\cite{FSBG},~\cite{FSBG-2} toward nonbinary LDPC codes and coded modulations, and in particular to codes over finite quasi-Frobenius rings (see also~\cite{Flanagan_cw_ind}). It was shown that the connections between LP decoding and MP decoding are preserved in the nonbinary settings as well. 

A promising approach for constructing LDPC codes using graphs goes back to~\cite{Tanner1}. 
The construction was modified in~\cite{Spielman}, where~\emph{expander graph} was used as 
an ingredient in a construction of linear-time decodable codes that correct a constant fraction of errors
under a variation of an MP decoder.  
This result was 
improved in the works~\cite{Zemor03},~\cite{Zemor04},~\cite{Roth-Skachek},~\cite{Skachek-Roth-2003},~\cite{Zemor01}. 
It was shown in~\cite{Zemor02} that expander codes achieve capacity of a binary symmetric channel
under a variation of MP decoder. 
Explicit constructions of regular expander graphs can be found, for instance, in~\cite{LPS},~\cite{Margulis}.  

In~\cite{Feldman-fraction}, the performance of expander codes in~\cite{Spielman} 
under the LP decoding was investigated. It was shown, that the LP decoder corrects a similar fraction of errors
as the MP decoder in~\cite{Spielman} does. This research direction was extended in~\cite{Feldman-capacity}, 
where it was shown that the expander codes achieve the capacity of a variety of 
binary memoryless channels. It was also shown in~\cite{Feldman-capacity}, that the LP decoder 
applied to the codes in~\cite{Zemor01} corrects a similar fraction of errors as the decoder therein,
which is approximately a quarter of the lower bound on their relative minimum distance. 

In this work, we generalize several results in~\cite{Feldman-capacity} toward \emph{nonbinary} settings. 
There are some additional differences between~\cite{Feldman-capacity} and our work. 
First, we use a slightly different definition of a (bipartite) expander graph and corresponding code. 
Second, the analysis in~\cite{Feldman-capacity} assumes that the all-zero codeword was transmitted,
while we do not make such an assumption. 
Finally, we present a more accurate analysis of the correctable fraction of errors, and, in particular, we
elaborate on the $o(1)$-term in the bound on a fraction of correctable errors. 

The manuscript is structured as follows. In Section~\ref{sec:defs}, we redefine (nonbinary) expander codes. 
In Section~\ref{sec:lp-decoder} we define a linear-programming decoder for these codes and discuss 
some of its basic properties. In Section~\ref{sec:dual}, we present the dual problem and 
discuss the criteria for the decoding success. In Section~\ref{sec:constant-fraction}, 
we present a feasible solution to the dual problem and show that the LP decoder corrects 
a constant fraction of errors. In Section~\ref{sec:error-orientation}, we present a concept 
of error pattern orientation. By using this concept, we show that the LP decoder corrects 
even higher fraction of errors. Finally, in Section~\ref{sec:discussion}, we summarize the results 
presented in this paper and compare them with some known works.


\section{Code Construction}
\label{sec:defs}

Below, we revisit the construction in~\cite{Zemor02}. 

Let $\graph = (A \cup B,E)$ be 
a bipartite $\Delta$-regular undirected connected graph with
a vertex set $V = A \cup B$ such that $A \cap B = \emptyset$,
and an edge set $E$ such that
every edge has one endpoint in $A$ and one endpoint in $B$.
We denote $|A|=|B|=n$ and thus $|E| = \Delta n$.
We assume an ordering on $V$, thereby inducing
an ordering on $E = \{ e_i \}_{i=1}^{|E|}$.
Let $\ff$ be the field $\ffq$. 
For every vertex $v \in V$, we denote by $E(v)$ the set of
edges that are incident with $v$.
For a word $\bldz = (z_e)_{\!\scriptscriptstyle e \in E}$ 
(whose entries are indexed by $E$) in $\ff^{|E|}$,
we denote by $(\bldz)_{\!\scriptscriptstyle E(v)}$
the sub-block of $\bldz$ that is indexed by $E(v)$.

For each $v \in V$, let $\code(v)$ 
be a linear code of length $\Delta$ over $\ff$. 
The expander code $\Code$ is defined 
as the following linear code of length $|E|$ over $\ff$:
$$
\Code = \left\{ \bldc \in \ff^{|E|} \,:\,
\textrm{$(\bldc)_{\!\scriptscriptstyle E(v)} \in \code(v)$
for every $v \in V$} \right\} \; . 
$$

Suppose that $\code_A$ and $\code_B$ are
linear $[\Delta,r_A \Delta, \delta_A \Delta]$ and 
$[\Delta,r_B \Delta,\delta_B \Delta]$ codes over $\ff$, respectively.   
In the sequel, we consider the code $\Code$ with 
\[
\code(v) = \left\{ \begin{array}{cc}
\code_A & \mbox{ for every } v \in A \\
\code_B & \mbox{ for every } v \in B 
\end{array} \right. \; . 
\]
This code was first studied in~\cite{Zemor02}. 
In particular, it was shown therein that the rate of $\Code$
is at least $r_A + r_B - 1$. 

Denote by $A_\graph$ the adjacency matrix of $\graph$; namely,
$A_\graph$ is a $|V| \times |V|$ real symmetric matrix
whose rows and columns are indexed by the set $V$, and for every 
$u, v \in V$, the entry in $A_\graph$ that is indexed by $(u,v)$
is given by 
$$
(A_\graph)_{u,v} =
\left\{
\begin{array}{lcl} 
1 && \textrm{if $\{u, v\} \in E$} \\
0 && \textrm{otherwise} \\
\end{array}
\right. \; .
$$
It is known that $\Delta$ is the largest eigenvalue of $A_\graph$.
We denote by $\gamma_\graph$ the ratio 
between the second largest eigenvalue of $A_\graph$ and $\Delta$.
The constructions of $\Delta$-regular bipartite expander graphs in~\cite{LPS},~\cite{Margulis}
have $\gamma_\graph \le 2\sqrt{\Delta-1}/\Delta$. 

The relative minimum distance of $\Code$, $\delta_\Code$,  
was shown in~\cite{Roth-Skachek} to satisfy
\begin{equation}
\delta_\Code \ge \frac{\delta_A \delta_B - \gamma_\graph \sqrt{ \delta_A \delta_B}}
{1 - \gamma_\graph} \; .
\label{eq:min-distance}
\end{equation}

In the sequel, we use the notation $\distance(\bldx, \bldz)$ to denote the 
Hamming distance between the vectors $\bldx$ and $\bldz$.

\section{Linear-programming decoder}
\label{sec:lp-decoder}

In this section, we introduce an LP decoder for the code $\Code$.  
Suppose that the codeword $\bldc = (c_e)_{e \in E} \in \Code$ is transmitted through the adversarial channel
and the word $\bldy = (y_e)_{e \in E} \in \ff^{|E|}$ is received.

We define the mapping 
\[
\bldxi \; : \; \ff \longrightarrow \{ 0, 1 \}^{q} \subset \mathbb{R}^{q} \; , 
\]
by
\[
\bldxi (\beta) = \bldx = ( x^{(\alpha)} )_{\alpha \in \ff } \; , 
\]
such that, for each $\alpha \in \ff$,
\[
x^{(\alpha)}=\left\{ \begin{array}{cc}
1 & \textrm{ if } \alpha = \beta \\
0 & \textrm{ otherwise. }\end{array}\right. \;
\]
The mapping $\bldxi$ is one-to-one, 
and its image is the set of binary vectors of length $q$ with Hamming weight 1. 
Please note that this mapping is slightly different from its counterpart in~\cite{FSBG}, 
where the image of the mapping was the set of binary vectors of length $q-1$ of Hamming weight 0 or 1. 

We also define
\[
\bldXi \; : \; \ff^{|E|} \longrightarrow \{ 0, 1 \}^{q|E|} \subset \mathbb{R}^{q|E|} \; , 
\]
according to
\[
\bldXi(\bldc) = ( \bldxi(c_{e_{1}}) \; | \; 
\bldxi(c_{e_{2}}) \; | \; \cdots \; | \; \bldxi(c_{e_{|E|}})) \; .
\]
We note that $\bldXi$ is also one-to-one.

For vectors $\bldf \in \mathbb{R}^{q|E|}$, we adopt the notation
\[
\bldf = ( \bldf_{e_{1}} \; | \; \bldf_{e_{2}} \; | \; \cdots  \; | \; \bldf_{e_{|E|}} ) \; , 
\]
where
\[
\forall e \in E , \; \bldf_e = ( f_e^{(\alpha)} )_{\alpha \in \ff} \; .
\]
We can write the inverse of $\bldXi$ as
\[
\bldXi^{-1} (\bldf) = ( \bldxi^{-1}(\bldf_{e_{1}}), \bldxi^{-1}(\bldf_{e_{2}}), \cdots, 
\bldxi^{-1}(\bldf_{e_{|E|}}) ) \; .
\]

Below, we define the variables that will be used in the decoder. For all $e \in E$, $\alpha \in \ff$, 
we use the variables $f_e^{(\alpha)} \ge 0$. The objective function is 
$\sum_{e \in E} \sum_{\alpha \in \ff} \gamma_e^{(\alpha)} f_e^{(\alpha)}$, 
where $\gamma_e^{(\alpha)}$ is a function of the channel output. 

For each $\alpha \in \ff$ we set 
\[
\gamma_e^{(\alpha)} = \left\{ \begin{array}{ll}
- 1 & \mbox{if } \alpha = y_e \\
1 & \mbox{if } \alpha \neq y_e
\end{array} \right. \; . 
\]

Assume that $\bldf_e = \bldxi(\beta)$ for some $e\in E$, $\beta \in \ff$. Then,
it is straightforward to verify that 
\[
\sum_{\alpha \in \ff} \gamma_e^{(\alpha)} f_e^{(\alpha)} =
\left\{ \begin{array}{cl}
- 1 & \mbox{ if }  \beta = y_e \\
1 & \mbox{ if } \beta \neq y_e
\end{array} \right. \; .  
\] 

Suppose now that $\bldf = \Xi(\bldz)$ for some $\bldz \in \ff^{|E|}$. 
It follows that 
\begin{equation}
\sum_{e \in E} \sum_{\alpha \in \ff} \gamma_e^{(\alpha)} f_e^{(\alpha)} + |E| = 2 \distance(\bldy, \bldz)  \; .
\label{eq:hamming-dist}
\end{equation}
(Recall that the notation $\distance(\bldy, \bldz)$ is used for 
the Hamming distance between $\bldy$ and $\bldz$.)
Therefore, finding $\bldz \in \Code$ such that $\bldf = \Xi(\bldz)$ 
minimizes the left-hand side of~(\ref{eq:hamming-dist}) is equivalent to the nearest-neighbor decoding of $\bldy$.  
Instead, however, we will equivalently maximize
\begin{equation}
- \sum_{e \in E} \sum_{\alpha \in \ff} \gamma_e^{(\alpha)} f_e^{(\alpha)} \; . 
\label{eq:objective-func}
\end{equation}

In the sequel, we use the variables $w_{v,\bldbb}$ for all $v \in V$ and all $\bldb \in \code(v)$. 
These variables can be viewed as \emph{relative weights} of local codewords $\bldb$ associated with the
edges incident with the vertex $v$. The corresponding linear-programming problem is presented in 
Figure~\ref{fig:lp-primal}. 

\begin{figure*}[htb]
\small \noindent
\makebox[0in]{}\hrulefill\makebox[0in]{} 
\begin{eqnarray}
\mbox{\sf\footnotesize Maximize} \qquad \qquad \phantom{\forall v \in V} & & \sum_{e \in E, \alpha \in \ff} \left( - \gamma_e^{(\alpha)} \right) \cdot  f_e^{(\alpha)}  \label{LP-primal-1} \\
\mbox{\sf\footnotesize subject to} \qquad \qquad \forall v \in V & : & \sum_{\bldbb \in \code(v)} w_{v,\bldbb} = 1 \; ; \label{LP-primal-2}\\
\forall e = \{ v,u \} \in E, \; \forall \alpha \in \ff & : & f_e^{(\alpha)} = \sum_{\bldbb \in \code(v) \; : \; b_e = \alpha} w_{v,\bldbb} \; , \label{LP-primal-3} \\
& & f_e^{(\alpha)} = \sum_{\bldbb \in \code(u) \; : \; b_e = \alpha} w_{u,\bldbb} \; ; \label{LP-primal-4} \\
\forall e \in E, \; \alpha \in \ff & : & f_e^{(\alpha)} \ge 0 \; ; \\
\forall v \in V, \; \bldb \in \code(v) & : & w_{v,\bldbb} \ge 0 \; \label{LP-primal-6} .  
\end{eqnarray}
\makebox[0in]{}\hrulefill\makebox[0in]{}
\caption{Primal LP problem}
\label{fig:lp-primal}
\end{figure*}

Constraints~(\ref{LP-primal-2})-(\ref{LP-primal-6}) form a polytope which we denote by $\cQ$. 
In particular, it follows from constraints~(\ref{LP-primal-2})-(\ref{LP-primal-6}) that 
\begin{equation}
\forall e \in E \; : \; \sum_{\alpha \in \ff} f_e^{(\alpha)} = 1 \; . 
\label{eq:sum-1}
\end{equation}

Next, we define the decoding algorithm for the code $\Code$. The decoder optimizes the objective 
function~(\ref{LP-primal-1}) subject to constraints~(\ref{LP-primal-2})-(\ref{LP-primal-6}). 
If the result $\bldf$ is in $\{0,1\}^{q|E|}$, then the decoder outputs $\Xi^{-1}(\bldf)$
(as it is shown below, this output is then a codeword of $\Code$). Otherwise, the decoder 
declares a \emph{decoding failure}. 

We have the following proposition. 
\begin{proposition}
\label{prop:equivalence}
$\,$
\begin{enumerate}
\item
Let $(\bldf,\bldw) \in \cQ$ and $\bldf \in \{0, 1\}^{q|E|}$. Then  
\[
\bldXi^{-1} (\bldf) \in \Code \; . 
\] 
\item
If $\bldc \in \Code$ then there exists $\bldw$ such that $(\bldf, \bldw) \in \cQ$ and 
$\bldf = \Xi(\bldc) \in \{0,1\}^{q|E|}$.  
\end{enumerate}
\end{proposition}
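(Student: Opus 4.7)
The proposition has two symmetric directions, and both are structural (no real inequality to fight); the plan is to unpack the constraints of $\cQ$ carefully and translate them back to the combinatorial definition of $\Code$.

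For the first part, I would start from the fact that $\bldf \in \{0,1\}^{q|E|}$ together with the identity~(\ref{eq:sum-1}) derived from the polytope constraints. For each edge $e$, the vector $\bldf_e$ has nonnegative $0/1$ entries summing to $1$, so it has exactly one coordinate equal to $1$; hence $\bldf_e = \bldxi(\beta_e)$ for a unique $\beta_e \in \ff$, making $\bldXi^{-1}(\bldf)$ well-defined. The key step is then to fix an arbitrary $v \in V$ and show $(\bldXi^{-1}(\bldf))_{E(v)} \in \code(v)$. I would use constraint~(\ref{LP-primal-3}): for every edge $e \in E(v)$ and every $\alpha \ne \beta_e$, we have $0 = f_e^{(\alpha)} = \sum_{\bldb \in \code(v):\, b_e = \alpha} w_{v,\bldbb}$, and since each $w_{v,\bldbb} \ge 0$, every $\bldb \in \code(v)$ with $w_{v,\bldbb} > 0$ must satisfy $b_e = \beta_e$. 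Combined with~(\ref{LP-primal-2}), which says the $w_{v,\bldbb}$'s form a convex combination and hence at least one is positive, we conclude that the unique local pattern $(\beta_e)_{e \in E(v)}$ lies in $\code(v)$. Thus the restriction of $\bldXi^{-1}(\bldf)$ to $E(v)$ is a codeword of $\code(v)$ for every $v$, which is exactly the definition of $\Code$.

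For the second part, the construction is explicit: given $\bldc \in \Code$, for each $v \in V$ set $\bldb^{(v)} := (\bldc)_{E(v)} \in \code(v)$, put $w_{v,\bldbb^{(v)}} = 1$, and $w_{v,\bldbb} = 0$ for all other $\bldb \in \code(v)$; and take $\bldf := \bldXi(\bldc)$. I would then verify each constraint of $\cQ$ in turn: (\ref{LP-primal-2}) is immediate; the nonnegativity conditions hold by construction; for~(\ref{LP-primal-3}) and~(\ref{LP-primal-4}), note that on each $e = \{u,v\}$, $f_e^{(\alpha)} = 1$ iff $c_e = \alpha$, and the sum $\sum_{\bldbb \in \code(v):\, b_e = \alpha} w_{v,\bldbb}$ equals the single term $w_{v,\bldbb^{(v)}} = 1$ when $b^{(v)}_e = c_e = \alpha$, and is $0$ otherwise, which matches $f_e^{(\alpha)}$. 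The same check works on the $u$-side.

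I do not foresee a real obstacle; the only subtlety is the logical step in Part~1 from $f_e^{(\alpha)} = 0$ to ``every $\bldb$ with $w_{v,\bldbb}>0$ has $b_e \ne \alpha$,'' which requires invoking nonnegativity of the $w_{v,\bldbb}$'s (so no cancellation can occur inside the sum). Once that point is made cleanly, the rest is routine bookkeeping between the three representations of a codeword: the original vector $\bldc$, its lifted indicator $\bldXi(\bldc)$, and the per-vertex weight distribution $\bldw$.
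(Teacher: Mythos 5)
Your proposal is correct and follows essentially the same route as the paper's proof: Part~1 uses~(\ref{eq:sum-1}), the consistency constraints~(\ref{LP-primal-3})--(\ref{LP-primal-4}), nonnegativity of the $w_{v,\bldbb}$'s, and~(\ref{LP-primal-2}) to force all weight at $v$ onto the local pattern $(\bldXi^{-1}(\bldf))_{E(v)}$, which must therefore lie in $\code(v)$; Part~2 is the same explicit indicator construction the paper uses. The only difference is cosmetic --- you argue that every positively-weighted local codeword must coincide with the local pattern, whereas the paper argues that every other local codeword has zero weight --- and you spell out the constraint verification in Part~2 that the paper leaves to the reader.
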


\begin{proof} 
\begin{enumerate}
\item
Suppose $(\bldf,\bldw) \in \cQ$ and $\bldf \in \{0, 1\}^{q|E|}$. Let $\bldc = \Xi^{-1} (\bldf)$. 
By~(\ref{eq:sum-1}), $\bldc$ is well defined. 
Next, fix some $v \in V$ and let $\blda = (\bldc)_{\!\scriptscriptstyle E(v)}$ (for $\blda = (a_e)_{e \in E(v)})$. It follows that for any $e \in E(v)$, $\alpha \in \ff$, 
$f_e^{(\alpha)} = 1$ if and only if $a_e = \alpha$. Let $\bldd \in \code(v)$, $\bldd \neq \blda$. 
Since $\blda$ and $\bldd$ are different, there exists $\beta \in \ff$ and $e' \in E(v)$ such that 
$a_{e'} \neq \beta$ and $d_{e'} = \beta$. 
Then, it follows from~(\ref{eq:sum-1}) and either from~(\ref{LP-primal-3}) or from~(\ref{LP-primal-4}) that 
\[
0 = f_{e'}^{(\beta)} = \sum_{\bldbb \in \code(v) \; : \; b_{e'} = \beta} w_{v,\bldbb} \; , 
\]
and therefore $w_{v,\blddd} = 0$. 

It follows that $w_{v,\blddd}=0$ for all $\bldd \in \code(v)$, $\bldd \neq \blda$, and that $w_{v,\bldaa}=1$. 
Applying this argument 
for every $v \in V$ implies $\bldc \in \Code$. 
 
\item
Assume that $\bldc \in \Code$. Let $\bldf = \Xi(\bldc)$. For each $v \in V$, we set   
\[
w_{v,\bldbb} = \left\{ \begin{array}{cl}
1 & \mbox{ if } \bldb = (\bldc)_{\!\scriptscriptstyle E(v)} \\
0 & \mbox{ otherwise } 
\end{array} \right. \; . 
\]
The reader can easily verify that $\bldf \in \{0,1\}^{q|E|}$ and 
the corresponding $(\bldf, \bldw)$ is in~$\cQ$. 
\end{enumerate}
\end{proof} 

The following theorem is an equivalent of the \emph{nearest-neighbor certificate}. 
\begin{theorem}
Suppose that the LP solver applied to the LP problem in Figure~\ref{fig:lp-primal} 
outputs a codeword $\bldc \in \Code$. Then, $\bldc$ is the nearest-neighbor codeword. 
\end{theorem}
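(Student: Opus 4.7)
The plan is to show that the LP is a valid relaxation of the nearest-neighbor decoding problem, so that whenever its optimum happens to be integral (and therefore corresponds to an honest codeword), that codeword must minimize Hamming distance to $\bldy$ among all codewords.

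First, I would invoke Proposition~\ref{prop:equivalence}, part (2), to observe that every codeword $\bldc' \in \Code$ lifts to some feasible point $(\bldf', \bldw') \in \cQ$ with $\bldf' = \Xi(\bldc') \in \{0,1\}^{q|E|}$. Thus, the feasible polytope $\cQ$ contains an integral representative of every codeword, and the LP optimum is at least the best objective value attained at any such representative.

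Next, I would use the identity~(\ref{eq:hamming-dist}) to translate the objective into Hamming distance at these integral points: whenever $\bldf' = \Xi(\bldc')$, the objective~(\ref{LP-primal-1}) equals $|E| - 2\,\distance(\bldy, \bldc')$. Let $\bldc$ be the codeword output by the LP solver, corresponding to an optimal integral solution $\bldf = \Xi(\bldc)$ with objective value $|E| - 2\,\distance(\bldy, \bldc)$. Since every other codeword $\bldc'$ gives a feasible point with objective $|E| - 2\,\distance(\bldy, \bldc')$, optimality of $(\bldf, \bldw)$ forces
\[
|E| - 2\,\distance(\bldy, \bldc) \; \ge \; |E| - 2\,\distance(\bldy, \bldc') \quad \textrm{for all } \bldc' \in \Code,
\]
which rearranges to $\distance(\bldy, \bldc) \le \distance(\bldy, \bldc')$, i.e., $\bldc$ is a nearest-neighbor codeword.

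There is no real obstacle here: the argument is essentially that ``LP relaxation + integer optimum $\Rightarrow$ optimum of the integer problem,'' specialized to this setting. The only things that need to be verified carefully are (i) that Proposition~\ref{prop:equivalence}(2) really gives an \emph{onto} map from codewords to integral feasible points (so we are not missing any codewords in the comparison), and (ii) that the objective-to-Hamming-distance identity~(\ref{eq:hamming-dist}) applies to every such integral feasible point, not merely to $\bldc$ itself. Both are immediate from the statements already proved, so the theorem follows in a few lines.
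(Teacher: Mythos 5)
Your proposal is correct and is precisely the argument the paper intends: its one-line proof cites Proposition~\ref{prop:equivalence} (so every codeword yields a feasible integral point of $\cQ$) together with the identity~(\ref{eq:hamming-dist}) (so the objective at such a point is $|E|-2\,\distance(\bldy,\bldc')$), and LP optimality then forces the output codeword to minimize the Hamming distance, exactly as you spell out. No gaps; you have simply written out the details the paper leaves implicit.
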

The proof follows from the previous proposition and~(\ref{eq:hamming-dist}).

\section{Dual Witness and Unique Solution}
\label{sec:dual} 

We aim to show that the decoder succeeds given that the number of adversarial errors 
is bounded from above by a certain constant. We use the \emph{dual witness} approach proposed 
in~\cite{Feldman-fraction}. This technique was extended in~\cite{Feldman-capacity} 
toward binary expander code. We further extend this technique toward nonbinary settings. 

Recall that the codeword $\bldc \in \Code$ was transmitted. If that is the case,
the decoder succeeds if it outputs the same $\bldc$. It follows from 
Proposition~\ref{prop:equivalence} that there is only one feasible combination of values of 
the variables $w_{v,\bldbb}$ that corresponds to the codeword $\bldc$, namely 
\[
\forall v \in V \; : \; w_{v,\bldbb} = \left\{ \begin{array}{cl}
1 & \mbox{ if } \bldb = (\bldc)_{\!\scriptscriptstyle E(v)} \\
0 & \mbox{ otherwise } 
\end{array} \right. \; . 
\]
The sufficient criteria 
for the decoder success is that this solution is the {\em unique} optimum of the 
LP decoding problem in Figure~\ref{fig:lp-primal}. 

To prove the optimality, we show the existence of 
a dual feasible solution, such that the value of the objective function
of the dual problem is equal to the value of the objective functions of the primal problem. 
The dual LP problem makes use of the following variables. For each $\alpha \in \ff$, $e \in E$, and $v \in V$,  
such that $v$ is an endpoint of $e$, there is a variable $\tau_{v,e}^{(\alpha)}$. In addition,
for each $v \in V$, there is a variable $\sigma_v$. 

\begin{figure*}[htb]
\small \noindent
\makebox[0in]{}\hrulefill\makebox[0in]{} 
\begin{eqnarray}
\mbox{\sf\footnotesize Minimize} \qquad \phantom{\; \forall e = \{ v,u \} \in E, \; \forall \alpha \in \ff} 
&& \sum_{v \in V} \sigma_v  \label{LP-dual-1} \\
\mbox{\sf\footnotesize subject to} \qquad \forall e = \{ v,u \} \in E, \; \forall \alpha \in \ff & : & 
\tau_{v,e}^{(\alpha)} + \tau_{u,e}^{(\alpha)} \le \gamma_e^{(\alpha)} \; ; \label{LP-dual-2} \\
 \forall v \in V, \; \forall \bldb \in \code(v) & : & \sum_{e \in E(v)} \tau_{v,e}^{(b_e)} 
+ \sigma_v \ge 0 \;  \label{LP-dual-3} .  
\end{eqnarray}
\makebox[0in]{}\hrulefill\makebox[0in]{}
\caption{Dual LP problem}
\label{fig:lp-dual}
\end{figure*}

The dual LP problem is presented in Figure~\ref{fig:lp-dual}. 
We set the objective value to be $|E| - 2 \dist(\bldy, \bldc)$, which is the value in~(\ref{eq:objective-func}) 
under the substitution $\bldz = \bldc$ (this fact easily follows from~(\ref{eq:hamming-dist})).  
This can be achieved by setting, for all $v \in V$, $\sigma_v = \half \Delta - \dist((\bldy)_{\!\scriptscriptstyle E(v)}, 
(\bldc)_{\!\scriptscriptstyle E(v)})$. 

In order to show the uniqueness of the solution, we slightly modify the dual LP problem. 
More specifically, we enforce strict inequalities in~(\ref{LP-dual-2}), 
such that the corresponding dual polytope (denoted by $\cP$)
becomes as in Figure~\ref{fig:lp-dual-slack}. Generally speaking, the polytope $\cP$ 
can be unbounded, and thus, sometimes we use the term ``open polytope''.  
\begin{figure*}[htb]
\small \noindent
\makebox[0in]{}\hrulefill\makebox[0in]{} 
\begin{eqnarray}
\hspace{-5ex} 
\forall e = \{ v,u \} \in E, \; \forall \alpha \in \ff \backslash \{ c_e \} & : & 
\tau_{v,e}^{(\alpha)} + \tau_{u,e}^{(\alpha)} < \gamma_e^{(\alpha)} \; ; \label{LP-polytope-1} \\
\forall e = \{ v,u \} \in E  & : & 
\tau_{v,e}^{(c_e)} + \tau_{u,e}^{(c_e)} \le \gamma_e^{(c_e)} \; ; \label{LP-polytope-1-1} \\
\forall v \in V, \; \forall \bldb \in \code(v) & : & \sum_{e \in E(v)} \tau_{v,e}^{(b_e)} 
\ge  - \half \Delta + \distance ( (\bldy)_{\!\scriptscriptstyle E(v)}, (\bldc)_{\!\scriptscriptstyle E(v)}) \; . 
\label{LP-polytope-2}    
\end{eqnarray}
\makebox[0in]{}\hrulefill\makebox[0in]{}
\caption{Dual (open) polytope $\cP$}
\label{fig:lp-dual-slack}
\end{figure*}

The uniqueness of the solution for the primal LP problem now follows from the following proposition. 

\begin{proposition}
If there is a feasible point in the polytope $\cP$, then there is a \emph{unique} 
optimum for the primal LP problem in Figure~\ref{fig:lp-primal}.
\label{prop:unique}
\end{proposition}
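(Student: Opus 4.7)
The plan is to combine LP duality with complementary slackness applied to the strict inequalities defining $\cP$: duality will show that the codeword assignment is primal-optimal, and complementary slackness will then force every primal optimum to coincide with it.

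First, I would check that a feasible $\bldtau \in \cP$ extends to a feasible solution of the (non-strict) dual LP in Figure~\ref{fig:lp-dual} via the substitution $\sigma_v = \half\Delta - \dist((\bldy)_{\!\scriptscriptstyle E(v)}, (\bldc)_{\!\scriptscriptstyle E(v)})$: constraints (\ref{LP-polytope-1}) and (\ref{LP-polytope-1-1}) together are stronger than (\ref{LP-dual-2}), and (\ref{LP-polytope-2}) becomes (\ref{LP-dual-3}) after the substitution. The resulting dual objective value is $\sum_{v} \sigma_v = |E| - 2\dist(\bldy,\bldc)$. On the primal side, Proposition~\ref{prop:equivalence}(2) supplies a feasible $(\bldf,\bldw)$ with $\bldf = \bldXi(\bldc)$, and an easy direct calculation using the definition of $\gamma_e^{(\alpha)}$ gives its primal objective as $-\sum_e \gamma_e^{(c_e)} = |E| - 2\dist(\bldy,\bldc)$. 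The primal and dual values agree, so by weak duality both points are optimal.

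Next, I would upgrade optimality to uniqueness via complementary slackness. Each strict inequality in (\ref{LP-polytope-1}) is paired with a primal variable $f_e^{(\alpha)}$ for some $\alpha \neq c_e$; since the dual slack is strictly positive at our $\bldtau$, every primal optimum must satisfy $f_e^{(\alpha)} = 0$ for all $e \in E$ and all $\alpha \in \ff \setminus \{c_e\}$. Combined with the normalization (\ref{eq:sum-1}), this forces $f_e^{(c_e)} = 1$ for every $e$, so every primal-optimal $\bldf$ coincides with $\bldXi(\bldc)$. Once $\bldf$ is pinned down, the bookkeeping argument used in the proof of Proposition~\ref{prop:equivalence}(1) then forces $w_{v,\bldb} = 1$ exactly when $\bldb = (\bldc)_{\!\scriptscriptstyle E(v)}$ (and zero otherwise), so the whole optimum $(\bldf,\bldw)$ is unique.

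The main delicate point is using the strict inequalities only in the direction in which they actually hold. Only the $\alpha \neq c_e$ constraints (\ref{LP-polytope-1}) are strict, so complementary slackness only pins down the $\alpha \neq c_e$ components of $\bldf$; the remaining component $f_e^{(c_e)}$ must be recovered from the normalization (\ref{eq:sum-1}) rather than from any statement about (\ref{LP-polytope-1-1}) or (\ref{LP-polytope-2}). Once this bookkeeping is kept straight, uniqueness of $\bldw$ needs no fresh ideas beyond a re-run of the argument already developed in Proposition~\ref{prop:equivalence}(1).
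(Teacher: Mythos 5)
Your argument is correct, and its second half is genuinely different from the paper's. The first step (extending a point of $\cP$ to a feasible point of the dual in Figure~\ref{fig:lp-dual} with $\sigma_v = \half\Delta - \dist((\bldy)_{\!\scriptscriptstyle E(v)},(\bldc)_{\!\scriptscriptstyle E(v)})$, matching the primal value $|E|-2\dist(\bldy,\bldc)$ of the integral point $\bldf=\bldXi(\bldc)$ from Proposition~\ref{prop:equivalence}, and concluding optimality by weak duality) is exactly the paper's first step, stated via~(\ref{eq:hamming-dist}). For uniqueness, however, the paper does not invoke complementary slackness: it perturbs the cost to $\hat{\gamma}_e^{(\alpha)} = \gamma_e^{(\alpha)}-\varepsilon$ wherever $f_e^{(\alpha)}=0$ (the strictness in~(\ref{LP-polytope-1}) guarantees the same $\bldtau$ certifies optimality of $(\bldf,\bldw)$ for the perturbed cost, and the objective at $(\bldf,\bldw)$ is unchanged), and then shows any second optimum $(\bldh,\blds)$ with $\bldh\neq\bldf$ would strictly beat $(\bldf,\bldw)$ under $\hat{\bldgamma}$, a contradiction. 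Your route instead applies complementary slackness between the (now optimal) dual point and an arbitrary primal optimum: since the dual constraints~(\ref{LP-dual-2}) indexed by $\alpha\neq c_e$ are strictly slack, every primal optimum has $h_e^{(\alpha)}=0$ there, and~(\ref{eq:sum-1}) then forces $h_e^{(c_e)}=1$; the bookkeeping of Proposition~\ref{prop:equivalence}(1) pins down $\bldw$, exactly as you say. The two are closely related — the paper's perturbation is essentially a hands-on derivation of the complementary-slackness consequence — but yours is shorter and leans on the standard LP theorem (and on correctly identifying~(\ref{LP-dual-2}) as the dual constraint paired with the variable $f_e^{(\alpha)}$, which does hold for the primal/dual pair in Figures~\ref{fig:lp-primal} and~\ref{fig:lp-dual}), whereas the paper's argument is self-contained, using only weak duality and an explicit contradiction. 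Your care in using the strictness only for $\alpha\neq c_e$ and recovering $f_e^{(c_e)}$ from~(\ref{eq:sum-1}) mirrors the role of~(\ref{LP-polytope-1-1}) in the paper and is handled correctly.
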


\begin{proof} 
First, it is straight-forward to see that any feasible point 
$\bldtau = \{ \tau_{v,e}^{(\alpha)} \}_{v \in V, e \in E, \alpha \in \ff}$ 
in $\cP$ is also a feasible point in the polytope in Figure~\ref{fig:lp-dual}
with $\sigma_v = \half \Delta - \distance((\bldy)_{\!\scriptscriptstyle E(v)},(\bldc)_{\!\scriptscriptstyle E(v)})$, for all $v \in V$.  
Then, it follows from~(\ref{eq:hamming-dist}) that $(\bldf, \bldw)$ is an optimal solution
for the primal problem in Figure~\ref{fig:lp-primal}, where 
\[
\forall e \in E \; : \; \bldf_e = \xi(c_e) \; . 
\]
Assume that $(\bldh, \blds)$ is another optimal solution
for the LP problem in Figure~\ref{fig:lp-primal}. 

Inequality~(\ref{LP-polytope-1}) implies that 
\[
\tau_{v,e}^{(\alpha)} + \tau_{u,e}^{(\alpha)} \le \gamma_e^{(\alpha)} - \varepsilon \; , 
\]
for some small $\varepsilon > 0$, for all $e = \{ v, u \} \in E$, $\alpha \in \ff \backslash \{ c_e \}$. 
We define a new cost function $\hat{\bldgamma} = \{ \hat{\gamma}_e^{(\alpha)} \}_{e \in E, \alpha \in \ff}$ 
for the problem in Figure~\ref{fig:lp-primal} as
follows:
\[
\hat{\gamma}_e^{(\alpha)} = \left\{ \begin{array}{cl}
\gamma_e^{(\alpha)} - \varepsilon & \mbox{ if } f_e^{(\alpha)} = 0 \\
\gamma_e^{(\alpha)} & \mbox{ otherwise } 
\end{array} \right. \; . 
\]
Observe, that 
\[
\sum_{e \in E, \alpha \in \ff} \left( - \hat{\gamma}_e^{(\alpha)} \right) \cdot  f_e^{(\alpha)}  = 
\sum_{e \in E, \alpha \in \ff} \left( - \gamma_e^{(\alpha)} \right) \cdot  f_e^{(\alpha)} \; . 
\]
It follows that $(\bldf, \bldw)$ is an optimal solution
for the LP problem in Figure~\ref{fig:lp-primal} under the cost function $\hat{\bldgamma}$. 

Note that $(\bldf, \bldw)$ corresponds to a codeword $\bldc$, and so its entries are either $0$ or $1$. 
Moreover, $(\bldf, \bldw) \neq (\bldh, \blds)$, and so in particular $\bldf \neq \bldh$. 
Therefore, there must exist at least one $e \in E$ such that $\bldf_e \neq \bldh_e$. 
For such $e$, due to~(\ref{eq:sum-1}) (with respect to $\bldh_e$), there exists at 
least one $\beta \in \ff$ such that 
$f_e^{(\beta)} = 0$ and $h_e^{(\beta)} > 0$. Therefore, 
\begin{eqnarray*}
\sum_{e \in E, \alpha \in \ff} \left( - \hat{\gamma}_e^{(\alpha)} \right) \cdot h_e^{(\alpha)}
> \sum_{e \in E, \alpha \in \ff} \left( - \gamma_e^{(\alpha)} \right) \cdot h_e^{(\alpha)} \phantom{\; ,} \\
= \sum_{e \in E, \alpha \in \ff} \left( - \gamma_e^{(\alpha)} \right) \cdot f_e^{(\alpha)} \phantom{\; ,} \\
= \sum_{e \in E, \alpha \in \ff} \left( - \hat{\gamma}_e^{(\alpha)} \right) \cdot f_e^{(\alpha)} \; ,
\end{eqnarray*}
and this makes a contradiction to the fact that $(\bldf, \bldw)$ is an optimal solution 
to the primal problem under the cost function $\hat{\bldgamma}$. 
The contradiction follows from the (false) assumption that there is 
more than one optimal solution
for the original primal problem. 
\end{proof}

The following corollary follows immediately from Pro\-po\-si\-tion~\ref{prop:unique}. 
\begin{corollary}
If there is a feasible point in the polytope $\cP$, then the decoder in Figure~\ref{fig:lp-primal}
succeeds.
\end{corollary}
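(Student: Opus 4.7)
The plan is to chain together the earlier results: Proposition~\ref{prop:equivalence}(2) converts the transmitted codeword $\bldc$ into a primal feasible point, the hypothesis that $\cP$ contains a feasible point together with Proposition~\ref{prop:unique} upgrades this into uniqueness of the primal optimum, and the decoder's output rule then forces the algorithm to return $\bldc$.

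First, I would recall from the proof of Proposition~\ref{prop:unique} that the pair $(\bldf,\bldw)$ with $\bldf = \bldXi(\bldc)$ and $w_{v,\bldbb}$ equal to the indicator of $\bldb = (\bldc)_{\!\scriptscriptstyle E(v)}$ has already been exhibited as a primal optimum whenever $\cP$ is non-empty (it is feasible by Proposition~\ref{prop:equivalence}(2), it attains objective value $|E|-2\distance(\bldy,\bldc)$ by~(\ref{eq:hamming-dist}), and this value is matched on the dual side by setting $\sigma_v = \half\Delta - \distance((\bldy)_{\!\scriptscriptstyle E(v)},(\bldc)_{\!\scriptscriptstyle E(v)})$). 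Applying Proposition~\ref{prop:unique} to the feasible point in $\cP$ shows this primal optimum is unique, so any LP solver applied to the program in Figure~\ref{fig:lp-primal} must return precisely this $(\bldf,\bldw)$.

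Second, I would invoke the decoder's output rule. Since the returned $\bldf = \bldXi(\bldc)$ lies in $\{0,1\}^{q|E|}$, the decoder takes the ``output'' branch rather than declaring failure, and returns $\bldXi^{-1}(\bldf) = \bldc$, the transmitted codeword. Hence the decoder succeeds.

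No step presents a real obstacle; the corollary is essentially a bookkeeping consequence of Proposition~\ref{prop:unique} together with the explicit decoder definition and Proposition~\ref{prop:equivalence}. The only subtlety worth stressing is that the \emph{uniqueness} guaranteed by Proposition~\ref{prop:unique} is indispensable: if multiple optima coexisted, the solver could in principle return a non-integral $\bldf$ that would cause the decoder to declare failure rather than output $\bldc$.
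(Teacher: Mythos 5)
Your argument is correct and follows the same route as the paper, which simply notes that the corollary is an immediate consequence of Proposition~\ref{prop:unique}: the unique primal optimum is the integral point corresponding to the transmitted codeword $\bldc$, so the decoder outputs $\bldXi^{-1}(\bldf)=\bldc$. Your added care in drawing the identity of the optimum from the \emph{proof} of Proposition~\ref{prop:unique} (its statement alone only asserts uniqueness) is a sensible filling-in of the paper's ``follows immediately'' step, not a different approach.
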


\section{Correcting a Constant Fraction of Errors} 
\label{sec:constant-fraction} 

Recall that the word $\bldc = (c_e)_{\!\scriptscriptstyle e \in E}\in \Code$ was transmitted
and $\bldy = (y_e)_{\!\scriptscriptstyle e \in E} \in \ff^{|E|}$
was received. Suppose that $\graph = (A \cup B, E)$ is 
a $\Delta$-regular bipartite graph defined as in Section~\ref{sec:defs}.

In this section, we will define a notion of error core. Building on that, 
we will show that if there is no error core in the graph $\graph$, 
then the dual solution can be always found for the appropriate nonbinary 
LP decoding problem. 

{\em Definition:}
The graph $\graph$ has an $(\zeta_A, \zeta_B)$-error core 
(where $\zeta_A, \zeta_B \in [0,1]$) associated with the word $\bldy$
if there exists a subset of edges in error $E' \subseteq \{ e \in E \; : \; y_e \neq c_e \}$ 
and two subsets of vertices $A' \subseteq A$ and $B' \subseteq B$ such that $A' \cup B'$ 
is the set of all the endpoints of the edges in $E'$, and:
\begin{itemize}
\item
for any $v \in A'$: $| \{ E(v) \cap E' \} | \ge \zeta_A \Delta$;
\item
for any $v \in B'$: $| \{ E(v) \cap E' \} |  \ge \zeta_B \Delta$. 
\end{itemize}

Below, we inductively define the sets of vertices $V_i$ 
(for $i = 0, 1, \cdots, t$, where $t$ will be defined later)
and the sets of edges $E_i$ (for $i = 1, 2, \cdots, t$) as follows. 
\begin{itemize}
\item {\em Basis.} The edge set $E_1$ will be the set of all edges corresponding 
to the erroneous symbols in~$\bldy$, and the vertex sets $V_0$ and $V_1$ will be the endpoints
of edges in~$E_1$:
\begin{eqnarray*}
E_1 & = & \{ e \in E \; : \; y_e \neq c_e \} \; ; \\ 
V_0 & = & \{ v \in A \; : \; E(v) \cap E_1 \neq \emptyset \} \; ; \\ 
V_1 & = & \{ v \in B \; : \; E(v) \cap E_1 \neq \emptyset \} \; . 
\end{eqnarray*}
\item {\em Step.} For $i \ge 2$:
\begin{eqnarray*}
V_i = \Big\{ v \in V_{i-2} \, : \,
\Big| \{ e \in E(v) \cap E_{i-1} \} \Big|
\ge \frac{\delta \Delta}{4} \Big\} , 
\end{eqnarray*}
where $\delta = \delta_A$ if $i$ is even, and $\delta = \delta_B$ if $i$ is odd, and
\[
E_i = \Big\{ e= \{v, u\} \in E_{i-1} \, : \, v \in V_{i-1}, \; u \in V_i \Big \} . 
\]
\end{itemize}

\begin{lemma}
If $E_i = \emptyset$ for some finite $i$, then the decoder in Figure~\ref{fig:lp-primal} succeeds. 
\label{lemma:main-1}
\end{lemma}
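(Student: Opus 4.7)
By the corollary to Proposition~\ref{prop:unique}, it suffices to exhibit a feasible point in the dual open polytope $\cP$ of Figure~\ref{fig:lp-dual-slack}. My plan is to propose a parameterized family of dual variables and pin down the parameters using the peeling structure implied by the hypothesis $E_i = \emptyset$. Fix a small constant $\varepsilon > 0$ (to be chosen later) and antisymmetric reals $\{\rho_{v,e}\}$ with $\rho_{v,e} + \rho_{u,e} = 0$ for every edge $e = \{v,u\}$; then set
$$
\tau_{v,e}^{(\alpha)} \;=\; \half\,\gamma_e^{(\alpha)} \;+\; \rho_{v,e} \;-\; \varepsilon \cdot \bldone_{\{\alpha \neq c_e\}}.
$$
Summing the two endpoint contributions on any edge $e$ gives $\gamma_e^{(c_e)}$ when $\alpha = c_e$ and $\gamma_e^{(\alpha)} - 2\varepsilon < \gamma_e^{(\alpha)}$ when $\alpha \neq c_e$, so (\ref{LP-polytope-1}) and (\ref{LP-polytope-1-1}) are automatic irrespective of the $\rho_{v,e}$'s. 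The entire task is to choose $\rho$ so that (\ref{LP-polytope-2}) also holds at every vertex.

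A direct substitution based on the definition of $\gamma_e^{(\alpha)}$ rewrites (\ref{LP-polytope-2}) at a vertex $v$ and local codeword $\bldb \in \code(v)$ (with $S := \{e \in E(v) : b_e = c_e\}$) in the equivalent form
$$
\sum_{e \in E(v)} \rho_{v,e} \;+\; \nsf(v,\bldb) - \Int(v,\bldb) \;-\; \varepsilon \, |E(v)\setminus S| \;\ge\; 0,
$$
where $\nsf(v,\bldb) := |\{e \in E(v)\setminus S : y_e = c_e\}|$ and $\Int(v,\bldb) := |\{e \in E(v)\setminus S : b_e = y_e \neq c_e\}|$. Writing $\err(v) := |E(v)\cap E_1|$ and using the distance bound $|E(v)\setminus S| \ge \delta_v \Delta$ for $\bldb \neq (\bldc)_{\!\scriptscriptstyle E(v)}$ together with $\Int(v,\bldb) \le \err(v)$ and $\nsf(v,\bldb) \ge |E(v)\setminus S| - \err(v)$, one obtains $\nsf(v,\bldb) - \Int(v,\bldb) \ge \delta_v \Delta - 2\err(v)$. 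Thus every vertex with $\err(v) < \delta_v \Delta / 4$ already satisfies the inequality with slack at least $\half \delta_v \Delta - \varepsilon \Delta > 0$ even when $\rho_{v,e} \equiv 0$, for $\varepsilon$ small enough; this disposes immediately of every $A$-vertex outside $V_2$ and every $B$-vertex with no errors, leaving only the ``bad'' vertices in the layers $V_2, V_1, V_3, V_4,\ldots$

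The remainder is a flow argument that exploits the finite peeling depth. Because $E_t = \emptyset$, each error edge has a well-defined level $i^*(e) := \max\{i : e \in E_i\} \in \{1, \ldots, t-1\}$; non-error edges have level $0$. Processing edges in decreasing order of level, I would route a small positive charge along each $e \in E_i \setminus E_{i+1}$, from the endpoint being peeled off at step $i+1$ (an endpoint that, by the very definition of peeling, has fewer than $\delta\Delta/4$ of its incident edges remaining in the current core) toward its neighbor in the deeper layer. The magnitude of each charge is chosen small enough that every donor vertex retains its inductively-obtained surplus, and large enough that the total inflow at every bad vertex $v$ covers its worst-case deficit, which is bounded by $\err(v) + \varepsilon\Delta$. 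Antisymmetry is maintained by construction.

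The main obstacle is to show that this layer-by-layer flow construction closes consistently: the slack that a donor transfers to its deeper neighbor must not destroy the donor's own inequality, and each recipient must accumulate enough charge over its (at most $\Delta$) incident edges. The balancing relies crucially on the $\delta\Delta/4$ threshold in the definition of $V_i$, which ensures that each peeled vertex has at most $\delta\Delta/4$ edges inside the current core but the remaining $\ge (1 - \delta/4)\Delta$ edges provide enough surplus to fund the outgoing flow. Once such $\rho_{v,e}$'s are in place and $\varepsilon$ is taken small enough, the $\tau$-vector lies in $\cP$, and the corollary to Proposition~\ref{prop:unique} yields the desired decoder success.
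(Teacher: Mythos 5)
Your reduction to exhibiting a point of the open polytope $\cP$ is right, and your rewriting of constraint~(\ref{LP-polytope-2}) under your ansatz is correct, but the ansatz itself is fatally restrictive. Because your per-endpoint offset $\rho_{v,e}$ does not depend on $\alpha$, the quantity $\sum_{e \in E(v)} \rho_{v,e}$ enters every vertex constraint at $v$ as the same additive constant. Taking $\bldb = (\bldc)_{\!\scriptscriptstyle E(v)}$ (which lies in $\code(v)$) in your rewritten inequality gives $\sum_{e \in E(v)} \rho_{v,e} \ge 0$ at \emph{every} vertex; summing over all vertices and using the antisymmetry $\rho_{v,e} + \rho_{u,e} = 0$ forces $\sum_{e \in E(v)} \rho_{v,e} = 0$ at every vertex. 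So the flow contributes nothing, and feasibility inside your family would require $\distance((\bldy)_{\!\scriptscriptstyle E(v)}, \bldb) \ge \distance((\bldy)_{\!\scriptscriptstyle E(v)}, (\bldc)_{\!\scriptscriptstyle E(v)}) + \varepsilon\,|\{e : b_e \neq c_e\}|$ for all $v$ and all $\bldb \in \code(v)$, i.e.\ strict local nearest-neighbor optimality at every vertex. That is not implied by the hypothesis $E_i = \emptyset$: for example, let a single vertex $v \in B$ carry $\delta_B \Delta$ errors whose pattern makes $(\bldy)_{\!\scriptscriptstyle E(v)}$ itself a codeword of $\code_B$, each of the corresponding $A$-endpoints having only that one error. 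Then $V_2 = \emptyset$, hence $E_2 = \emptyset$ and the lemma applies, yet at $v$ the local codeword $\bldb = (\bldy)_{\!\scriptscriptstyle E(v)}$ violates the above requirement. So no choice of $\rho$ and $\varepsilon$ can work; the gap is not merely the admittedly unfinished ``layer-by-layer flow closes consistently'' step, but the parameterization itself.

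The paper's construction escapes exactly this trap by making the split across the two endpoints depend on $\alpha$. On an error edge $e$, the constraint for $\alpha = c_e$ has budget $\gamma_e^{(c_e)} = +1$, so \emph{both} endpoints receive $\tau^{(c_e)} = \half$; the negative burden on the coordinates $\alpha \neq c_e$ is then placed entirely on one endpoint ($-\fhalf - \epsilon$, with $+\thalf$ at the other), namely the endpoint that gets peeled off, i.e.\ the one with fewer than $\quarter \delta \Delta$ incident edges in $E_{i^*}$ at its level $i^*$. This asymmetric, $\alpha$-dependent assignment is precisely what cannot be written as $\half \gamma_e^{(\alpha)}$ plus an $\alpha$-independent antisymmetric potential, and it is what lets the minimum-distance bound $|\{e : b_e \neq c_e\}| \ge \delta \Delta$ absorb at most $\quarter \delta \Delta$ heavily penalized edges per vertex. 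Your orientation intuition (direct error edges toward the deeper layer) is the right one — it is essentially how the paper decides which endpoint takes the penalty — but it must be implemented through this $\alpha$-dependent value assignment rather than through a conservative flow.
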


\begin{proof} 
We show that the decoder succeeds by constructing a feasible point in the polytope $\cP$. 
We use $\epsilon > 0$ to denote the quantity, which can be made as small as desired. 
The precise value of $\epsilon$ will be discussed later. 
We set the variables $\tau_{u,e}^{(\alpha)}$ as follows. 
\begin{itemize}
\item 
Let $e = \{v,u\} \notin E_1$. Then, by definition of $E_1$, $c_e = y_e$. Assume that $c_e = \beta$.  
We set, $\tau_{v,e}^{(\beta)} = \tau_{u,e}^{(\beta)} = -1/2$, and 
so $\tau_{v,e}^{(\beta)} + \tau_{u,e}^{(\beta)} \le \gamma_e^{(\beta)} = -1$.  We also set 
$\tau_{v,e}^{(\alpha)} = \tau_{u,e}^{(\alpha)} = 1/2 - \epsilon$
for all $\alpha \in \ff \backslash \{ \beta \}$. 
In that case, $\tau_{v,e}^{(\alpha)} + \tau_{u,e}^{(\alpha)} < \gamma_e^{(\alpha)} = 1$.
Therefore,~(\ref{LP-polytope-1}) and~(\ref{LP-polytope-1-1}) are satisfied. 
\item
Let $e = \{v,u\} \in E_1$. Denote $c_e = \beta$. By definition of $E_1$, $y_e \neq c_e$. 
Let $i^*$ be the value such that $e \in E_{i^*} \backslash E_{i^*+1}$. 
In addition, without loss of generality assume that 
$v \in V_{i^*-1}$ and $u \in V_{i^*}$ (and so 
$v \notin V_{i^*+1}$ and 
$| E(v) \cap E_{i^*}| < \quarter \delta \Delta$). 

Then, we set $\tau_{v,e}^{(\beta)} = \tau_{u,e}^{(\beta)} = \half$. In that case, 
$\tau_{v,e}^{(\beta)} + \tau_{u,e}^{(\beta)} \le \gamma_e^{(\beta)} = 1$,
and so~(\ref{LP-polytope-1-1}) is satisfied.
We also set, for all $\alpha \in \ff \backslash \{ \beta \}$, 
$\tau_{v,e}^{(\alpha)} = - \fhalf - \epsilon$ and $\tau_{u,e}^{(\alpha)} = \thalf$, which yields
$\tau_{v,e}^{(\alpha)} + \tau_{u,e}^{(\alpha)} < \gamma_e^{(\alpha)} \in \{ -1, 1\}$.
Thus, all inequalities~(\ref{LP-polytope-1}) are also satisfied. 
\end{itemize}

Table~\ref{table:tau} summarizes the assignments of the values to variables $\tau_{v,e}^{(\alpha)}$
for all $e \in E$, $v \in e$ and $\alpha \in \ff$. 

\begin{table}[htb]
\begin{center}
{
$
\begin{array}{|c||c|c|}
\hline
&&\\
& \alpha = c_e & \alpha \neq c_e\\
&&\\
\hline
\hline 
&&\\
y_e \mbox{ is correct } & \tau_{v,e}^{(\alpha)} = - \half  & 
 \tau_{v,e}^{(\alpha)} = \half - \epsilon \\
&&\\
\hline 
&&\\
y_e \mbox{ is in error } & \tau_{v,e}^{(\alpha)} = \half  &  
\tau_{v,e}^{(\alpha)} = - \fhalf - \epsilon \mbox{ or } \tau_{v,e}^{(\alpha)} = \thalf  
\\
&& \mbox{ depends on the structure } \\
&& \mbox{ of the error } \\
&&\\
\hline 
\end{array}
$
}
\caption{Assignments of the values to the variables $\tau_{v,e}^{(\alpha)}$.}
\label{table:tau}
\end{center}
\end{table}

Since $E_i = \emptyset$ for some finite $i$ (we set $t = i+1$, where $i$ is this value), 
the values of all the variables $\tau_{v,e}^{(\alpha)}$ are defined. We already showed that all inequalities~(\ref{LP-polytope-1}) and~(\ref{LP-polytope-1-1})
are satisfied. Next, we show that inequalities~(\ref{LP-polytope-2}) are satisfied. It will be enough to show that 
for all $v \in V$, $\bldb \in \code(v)$,
\begin{eqnarray} 
\sum_{e \in E(v)} \tau_{v,e}^{(b_e)} 
\ge - \half \Delta + \distance((\bldy)_{\!\scriptscriptstyle E(v)}, (\bldc)_{\!\scriptscriptstyle E(v)})  \; . 
\label{eq:positive}
\end{eqnarray} 

For a vertex $v \in V$ and a codeword $\bldb \in \code(v)$, we define five sets of indices (edges) as follows:
\begin{eqnarray*}
\cE_1 & = & \{ e \in E(v) \; : \; y_e \mbox{ is correct and } b_e = c_e\} \; , \\ 
\cE_2 & = & \{ e \in E(v) \; : \; y_e \mbox{ is correct and } b_e \neq c_e \} \; , \\ 
\cE_3 & = & \{ e \in E(v) \; : \; y_e \mbox{ is in error and } b_e = c_e \} \; , \\
\cE'_4 & = & \{ e \in E(v) \; : \; y_e \mbox{ is in error, } \\
&& \hspace{10ex} b_e \neq c_e \mbox{ and } \tau_{v,e}^{(b_e)} = - \fhalf - \epsilon \} \; , \\ 
\cE''_4 & = & \{ e \in E(v) \; : \; y_e \mbox{ is in error, } \\
&& \hspace{15ex} b_e \neq c_e \mbox{ and } \tau_{v,e}^{(b_e)} = \thalf  \} \; . 
\end{eqnarray*}
(These sets depend on $v$ and $\bldb$, in addition to their dependence on $\bldc$ and $\bldy$. 
However, we write $\cE_j$ rather than $\cE_j(v, \bldb)$ for the sake of simplicity.) 

Then, 
\begin{eqnarray*}
&& \hspace{-5ex} \sum_{e \in E(v)} \tau_{v,e}^{(b_e)} \; = \; \sum_{e \in \cE_1} \tau_{v,e}^{(b_e)} + \sum_{e \in \cE_2}\tau_{v,e}^{(b_e)} + \sum_{e \in \cE_3}\tau_{v,e}^{(b_e)} \\
&& \hspace{15ex} + \sum_{e \in \cE'_4}\tau_{v,e}^{(b_e)} + \sum_{e \in \cE''_4}\tau_{v,e}^{(b_e)} \\
& = & \sum_{e \in \cE_1} (- \half) + \sum_{e \in \cE_2} (\half - \epsilon) + \sum_{e \in \cE_3} \half \\
&& \hspace{15ex} + \sum_{e \in \cE'_4} (- \fhalf - \epsilon) + \sum_{e \in \cE''_4} \thalf \\
& \ge & \Big( - \half \Delta + \distance((\bldy)_{\!\scriptscriptstyle E(v)}, (\bldc)_{\!\scriptscriptstyle E(v)}) \Big) + \sum_{e \in \cE_2} (1 - \epsilon)\\
&& \hspace{15ex}  + \sum_{e \in \cE'_4} (-3 - \epsilon) + \sum_{e \in \cE''_4} 1  \; . 
\end{eqnarray*} 
In order to prove~(\ref{eq:positive}), it will be enough to show that
\begin{equation}
|\cE_2| + |\cE''_4| \ge 3 |\cE'_4| + \epsilon (|\cE_2| + |\cE'_4|) \; . 
\label{eq:to-show} 
\end{equation}

We observe several cases. 
\begin{itemize} 
\item
Consider a vertex $v \in (A \backslash V_0) \cup (B \backslash V_1)$. 

Then, 
\begin{multline}
\Big| \{ e \in E(v) \; : \; y_e \neq c_e \} \Big| \\
= | \cE_3 | + | \cE'_4 | + | \cE''_4 | = 0  \; ,
\label{eq:vertex-v-1}
\end{multline}
and so~(\ref{eq:to-show}) is satisfied for any $\epsilon \le 1$. 

\item
Consider a vertex $v \in V_0 \cup V_1$. 
Let $\delta = \delta_A$ if $v \in A$, 
and $\delta = \delta_B$ if $v \in B$. 
Since $E_i = \emptyset$ for some $i \in \nn$, we have that $v \in V_{i^*-1} \backslash V_{i^*+1}$ 
for some $i^* \in \nn$. 
Therefore, 
\[
|E(v) \cap E_{i^*}| < \quarter \delta \Delta \; . 
\]
We can write, with respect to this $v$ and any $\bldb$, that 
\[
|\cE'_4 | \le \quarter (\delta - \epsilon') \Delta \; , 
\]
or,
\begin{equation} 
\delta \Delta \ge 4 |\cE'_4 | + \epsilon' \Delta\;,
\label{eq:cE-4}
\end{equation}
for some small $\epsilon'>0$. 
\begin{itemize}
\item
If $\bldb = (\bldc)_{\!\scriptscriptstyle E(v)}$, then obviously $|\cE_2|=|\cE'_4|=|\cE''_4|=0$, and so~(\ref{eq:to-show})  
holds. 
\item
If $\bldb \neq (\bldc)_{\!\scriptscriptstyle E(v)}$, then recall that the relative minimum distance of $\code(v)$ is
at least $\delta$. Therefore, $|\cE_2| + |\cE'_4| + |\cE''_4| \ge \delta \Delta$, 
and by using~(\ref{eq:cE-4}):
\[
|\cE_2| + |\cE''_4| \ge \delta \Delta - |\cE'_4| \ge  3 |\cE'_4| + \epsilon' \Delta \; .
\]
We see that~(\ref{eq:to-show}) holds for all $\epsilon \le \epsilon'$. 
\end{itemize}
\end{itemize}
We have shown that that in all cases, for sufficiently 
small $\epsilon$,~(\ref{LP-polytope-2}) holds, and therefore 
there exists a feasible point in $\cP$.  
\end{proof}

\begin{lemma}
If there is no $(\quarter \delta_A, \quarter \delta_B)$-error core, then $E_i = \emptyset$ for some $i \in \nn$. 
\label{lemma:main-2}
\end{lemma}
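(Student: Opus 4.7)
The plan is to argue by contrapositive: assume that $E_i \neq \emptyset$ for every $i \in \nn$ and construct a $(\quarter \delta_A, \quarter \delta_B)$-error core, contradicting the hypothesis of the lemma.

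The key observation is monotonicity combined with finiteness. Directly from the recursive definitions, $V_i \subseteq V_{i-2}$ and $E_i \subseteq E_{i-1}$ for every $i \ge 2$. Since $\graph$ is finite, the three chains $V_0 \supseteq V_2 \supseteq V_4 \supseteq \cdots$ (in $A$), $V_1 \supseteq V_3 \supseteq V_5 \supseteq \cdots$ (in $B$), and $E_1 \supseteq E_2 \supseteq \cdots$ must each stabilize after finitely many steps. I would denote their stable values by $V^*_A \subseteq A$, $V^*_B \subseteq B$, and $E^* \subseteq E_1$; by the standing assumption, $E^*$ is nonempty.

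Next, I would verify that the triple $(E^*, V^*_A, V^*_B)$ is a $(\quarter \delta_A, \quarter \delta_B)$-error core. Since $E^* \subseteq E_1$, every edge in $E^*$ is an edge in error. Fix an even index $j$ large enough that $V_j = V_{j+2} = V^*_A$, $V_{j+1} = V^*_B$, and $E_{j+1} = E^*$. Any $v \in V^*_A = V_{j+2}$ then satisfies $|E(v) \cap E_{j+1}| \ge \quarter \delta_A \Delta$ by the recursive definition (since $j+2$ is even, so the relevant threshold is $\delta_A$), that is, $|E(v) \cap E^*| \ge \quarter \delta_A \Delta$. The symmetric argument for odd indices yields $|E(v) \cap E^*| \ge \quarter \delta_B \Delta$ for every $v \in V^*_B$. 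Conversely, every endpoint of an edge in $E^*$ lies in $V^*_A \cup V^*_B$ by the very definition of $E_i$ as the edges of $E_{i-1}$ joining $V_{i-1}$ to $V_i$. Consequently $(E^*, V^*_A, V^*_B)$ meets all the requirements of an error core of the prescribed density, giving the desired contradiction.

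The only real obstacle is bookkeeping: one has to track carefully which parity of $i$ corresponds to the $A$-side versus the $B$-side of the bipartition, and which of $\delta_A, \delta_B$ controls the recursive threshold at that step, to make sure the counts $\quarter \delta_A \Delta$ and $\quarter \delta_B \Delta$ are attached to the correct sides. Beyond finiteness of $\graph$, no probabilistic or analytic input is required.
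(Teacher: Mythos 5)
Your proof is correct and follows essentially the same route as the paper's: assume $E_i \neq \emptyset$ for all $i$, use the monotonicity $E_i \subseteq E_{i-1}$, $V_i \subseteq V_{i-2}$ together with finiteness of $\graph$ to stabilize the chains, and observe that the stabilized sets form a $(\quarter\delta_A,\quarter\delta_B)$-error core. Your version is in fact slightly more careful than the paper's (you stabilize the vertex chains directly by finiteness and explicitly verify that the endpoints of $E^*$ are exactly $V^*_A \cup V^*_B$), but the argument is the same.
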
 

\begin{proof} 
Suppose that there is no $i \in \nn$ such that $E_i = \emptyset$. 
Since for all $i \in \nn$, $E_{i+1} \subseteq E_i$, we have that there exists some even 
$i^* \in \nn$, such that for any $i \ge i^*$, 
$E_{i+1} = E_{i} \neq \emptyset$. This, in turn, means that $V_{i^*+2} = V_{i^*}$ 
and $V_{i^*+3}=V_{i^*+1}$. However, this implies 
(without loss of generality) that every $v \in V_{i^*+1}$ and $u \in V_{i^*+2}$
has at least $\quarter \delta_A \Delta$ and $\quarter \delta_B \Delta$ 
incident edges in $E_{i^*+1}$, respectively. It follows that the set of edges $E_{i^*+1}$ 
together with the sets $V_{i^*}$ and $V_{i^*+1}$
forms a $(\quarter \delta_A, \quarter \delta_B)$-error core. 
\end{proof}

\begin{corollary}
If the LP decoder in Figure~\ref{fig:lp-primal} fails, then there 
exists an $(\quarter \delta_A, \quarter \delta_B)$-error core 
associated with the word $\bldy$ in the graph $\graph$. 
\label{thrm:core}
\end{corollary}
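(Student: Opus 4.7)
My plan is to establish the corollary by straightforward contraposition, chaining together the two preceding lemmas. Concretely, I will assume that there is no $(\quarter \delta_A, \quarter \delta_B)$-error core associated with $\bldy$ in $\graph$, and I will deduce that the LP decoder in Figure~\ref{fig:lp-primal} succeeds.

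The first step invokes Lemma~\ref{lemma:main-2}: under the standing assumption that no $(\quarter \delta_A, \quarter \delta_B)$-error core exists, the lemma yields some $i \in \nn$ for which $E_i = \emptyset$. Since the sequence $\{E_i\}$ is nested decreasing and becomes empty at a finite stage, this $i$ is certainly finite, which is exactly the hypothesis needed to apply Lemma~\ref{lemma:main-1}.

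The second step then applies Lemma~\ref{lemma:main-1}: because $E_i = \emptyset$ for a finite $i$, the construction in its proof produces a feasible assignment in the open polytope $\cP$, and Proposition~\ref{prop:unique} (together with the corollary immediately following it) guarantees that the decoder succeeds. The contrapositive of this chain of implications is precisely the statement of Corollary~\ref{thrm:core}: if the decoder fails, an $(\quarter \delta_A, \quarter \delta_B)$-error core must exist.

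There is no real obstacle here since all the heavy lifting is already in place. The only thing to be careful about is that the $i \in \nn$ produced by Lemma~\ref{lemma:main-2} is interpreted as ``some finite $i$,'' as required by Lemma~\ref{lemma:main-1}; this is automatic from the definition of $\nn$ used in the paper. The proof itself should therefore be a two- or three-line argument stating the contrapositive and citing the two lemmas in order.
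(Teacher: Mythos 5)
Your proposal is correct and matches the paper's argument: the paper likewise proves the corollary as the contrapositive of chaining Lemma~\ref{lemma:main-2} (no error core implies $E_i=\emptyset$ for some finite $i$) with Lemma~\ref{lemma:main-1} (then the decoder succeeds). Nothing further is needed.
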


The proof follows immediately from Lemmas~\ref{lemma:main-1} and~\ref{lemma:main-2}. 

Next, we show that the LP decoder in Figure~\ref{fig:lp-primal} corrects all the errors in $\bldy$ if the amount of errors 
in it is at most a fraction of the code length. Consider a subgraph $\cH = (U_A \cup  U_B, \Edge)$ of $\graph$ with 
$U_A \subseteq A$, $U_B \subseteq B$ and $\Edge \subseteq E$.
For a vertex $v \in U_A \cup U_B$ denote by $\deg_{\cH}(v)$ its degree in the graph $\cH$. 
We use the following known result. 

\begin{proposition}
Let $U_A$ and $U_B$
be subsets of sizes $|U_A| = \alp |A|$ and $|U_B| = \bet |B|$, respectively,
such that $\alp+\bet > 0$. 
Let $\Edge$ be the edge set induced by the vertex set $U_A \cup U_B$, and denote $\cH = (U_A \cup  U_B, \Edge)$. 
Then, 
\begin{eqnarray}
2 |\Edge| & = & \sum_{v \in U_A \cup U_B} \deg_{\cH}(v) \nonumber \\
& \le & 2 \left( \alp \bet + \gamma_\graph \sqrt{\alp (1 - \alp) \bet (1 - \bet)} \right) \Delta n \nonumber \\
& \le & 2 ((1 - \gamma_\graph) \alp \bet 
+ \gamma_\graph \sqrt{\alp \bet}) \Delta n \; .
\label{prop:av-degree}
\end{eqnarray}
\label{prop:degree}
\end{proposition}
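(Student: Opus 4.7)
The first equality is just the handshake lemma applied to the subgraph $\cH$: each edge in $\Edge$ contributes $1$ to the degree of each of its two endpoints in $\cH$, so summing the degrees counts each edge twice. From here, the task reduces to bounding $|\Edge|$, which (because $\graph$ is bipartite and $\Edge$ consists of all edges with both endpoints in $U_A \cup U_B$) equals $e(U_A, U_B)$, the number of edges going between $U_A$ and $U_B$.

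For the first nontrivial inequality I would invoke the standard bipartite expander mixing lemma. The plan is as follows. Let $\bldone_{U_A}, \bldone_{U_B} \in \rr^{V}$ be the indicator vectors of $U_A$ and $U_B$; then $e(U_A,U_B) = \bldone_{U_A}^T A_\graph \bldone_{U_B}$. Decompose each indicator vector along an orthonormal eigenbasis of $A_\graph$. Because $\graph$ is $\Delta$-regular and bipartite, $\bldone_A + \bldone_B$ and $\bldone_A - \bldone_B$ are eigenvectors with eigenvalues $\Delta$ and $-\Delta$ (respectively), and all other eigenvalues are bounded in absolute value by $\gamma_\graph \Delta$. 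Writing
\[
\bldone_{U_A} = \alp \bldone_A + \bldx_A, \qquad \bldone_{U_B} = \bet \bldone_B + \bldx_B,
\]
with $\bldx_A \perp \bldone_A$ and $\bldx_B \perp \bldone_B$ (both supported on their respective sides), the component of $\bldone_{U_A}$ along $\bldone_A + \bldone_B$ and along $\bldone_A - \bldone_B$ contributes $\alp \bet \Delta n$ to the inner product $\bldone_{U_A}^T A_\graph \bldone_{U_B}$, and the remaining cross-term is bounded using the eigenvalue bound and Cauchy--Schwarz by $\gamma_\graph \Delta \cdot \|\bldx_A\| \cdot \|\bldx_B\|$. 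A direct calculation gives $\|\bldx_A\|^2 = \alp(1-\alp)n$ and $\|\bldx_B\|^2 = \bet(1-\bet)n$, yielding
\[
e(U_A,U_B) \le \alp\bet\,\Delta n + \gamma_\graph \Delta n \sqrt{\alp(1-\alp)\bet(1-\bet)},
\]
which multiplied by $2$ is the middle expression in~(\ref{prop:av-degree}).

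The final inequality is purely algebraic. After cancelling the common $\alp\bet\,\Delta n$ term and factoring out $\gamma_\graph \sqrt{\alp\bet}$ from what remains, the claim reduces to
\[
\sqrt{\alp\bet} + \sqrt{(1-\alp)(1-\bet)} \le 1,
\]
which is an immediate consequence of Cauchy--Schwarz: $\sqrt{\alp \cdot \bet} + \sqrt{(1-\alp)(1-\bet)} \le \sqrt{(\alp + (1-\alp))(\bet + (1-\bet))} = 1$.

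The main obstacle is the expander mixing step; the rest is bookkeeping. In particular, one must be careful about the two ``extremal'' eigenvalues $\pm \Delta$ of a bipartite regular graph (as opposed to the non-bipartite case where only $+\Delta$ is extremal). The cleanest way to handle this is precisely to decompose the indicator vectors against the symmetric and antisymmetric top eigenvectors simultaneously, as sketched above, so that the $-\Delta$ eigenvalue does not spoil the bound. Since this is a standard tool and the authors cite it as a known result, I would give the sketch above and refer the reader to a standard reference (e.g., the proof in~\cite{Roth-Skachek} where the constituent bound~(\ref{eq:min-distance}) is derived from essentially the same lemma).
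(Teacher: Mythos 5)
Your proof is correct and follows essentially the same route as the source this proposition rests on: the paper itself gives no proof, deferring to Proposition~3.3 and Lemma~3.2 of~\cite{Roth-Skachek}, whose argument is exactly the bipartite expander-mixing computation you sketch (decomposing against both the $+\Delta$ and $-\Delta$ eigenvectors $\bldone_A\pm\bldone_B$, using that all remaining eigenvalues are at most $\gamma_\graph\Delta$ in absolute value by the bipartite spectral symmetry, and computing $\|\bldx_A\|^2=\alp(1-\alp)n$, $\|\bldx_B\|^2=\bet(1-\bet)n$). Your final algebraic step, reducing the second inequality to $\sqrt{\alp\bet}+\sqrt{(1-\alp)(1-\bet)}\le 1$ via Cauchy--Schwarz, is also sound.
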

This statement is equivalent to Proposition~3.3 in~\cite{Roth-Skachek}. The first inequality is 
obtained when the tighter inequality in Lemma 3.2 in~\cite{Roth-Skachek} is used in the proof of Proposition 3.3. 
If the graph is a Ramanujan expander as in~\cite{LPS},~\cite{Margulis}, then for fixed $\alp$ and $\bet$, by increasing $\Delta$ (and so by reducing $\gamma_\graph$), 
it is possible to make $|\Edge|/(\Delta n)$ as close to  
$(\alp \bet)$ as desired. 

By using Proposition~\ref{prop:degree}, we obtain the following theorem. 
\begin{theorem}
Assume that the size of error in $\bldy$ is less than
$$
\frac{\zeta_A \zeta_B - \gamma_\graph \sqrt{\zeta_A \zeta_B}}{1 - \gamma_\graph} \cdot \Delta n \; , 
$$ 
for some $\zeta_A, \zeta_B \in (0, 1]$, such that $\gamma_\graph \le \sqrt{\zeta_A \zeta_B}$. 
Then, the graph $\graph$ contains no $(\zeta_A, \zeta_B)$-error core associated with this $\bldy$. 
\label{thrm:core-size}
\end{theorem}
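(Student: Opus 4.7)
The plan is to proceed by contradiction and reduce the statement to Proposition~\ref{prop:degree}. Suppose $\graph$ does contain an $(\zeta_A,\zeta_B)$-error core, with error edge set $E'\subseteq\{e\in E:y_e\ne c_e\}$ and vertex sets $A'\subseteq A$, $B'\subseteq B$ consisting of all endpoints of $E'$. Write $|A'|=\alpha n$ and $|B'|=\beta n$.

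First, I would exploit the two core-degree conditions. Summing incidences on the $A$-side, every vertex of $A'$ is incident with at least $\zeta_A\Delta$ edges of $E'$, so $|E'|\ge\zeta_A\Delta\,\alpha n$; symmetrically $|E'|\ge\zeta_B\Delta\,\beta n$. This yields $\alpha\le|E'|/(\zeta_A\Delta n)$ and $\beta\le|E'|/(\zeta_B\Delta n)$, and in particular
\[
\alpha\beta\le\frac{|E'|^2}{\zeta_A\zeta_B\,\Delta^2 n^2},\qquad
\sqrt{\alpha\beta}\le\frac{|E'|}{\sqrt{\zeta_A\zeta_B}\,\Delta n}.
\]

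Second, note that $E'$ is contained in the edge set $\Edge$ of the subgraph $\cH$ induced on $A'\cup B'$. Applying the looser form of Proposition~\ref{prop:degree} gives
\[
|E'|\le|\Edge|\le\bigl((1-\gamma_\graph)\alpha\beta+\gamma_\graph\sqrt{\alpha\beta}\bigr)\Delta n.
\]
Substituting the step-one bounds, dividing through by $|E'|>0$, and rearranging using $\gamma_\graph\le\sqrt{\zeta_A\zeta_B}$ to isolate $|E'|$ yields
\[
|E'|\ge\frac{\zeta_A\zeta_B-\gamma_\graph\sqrt{\zeta_A\zeta_B}}{1-\gamma_\graph}\,\Delta n,
\]
which, because $E'$ is a subset of the actual error positions of $\bldy$, contradicts the hypothesis of the theorem.

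I do not foresee a genuine obstacle: the argument is a standard expander-mixing-style count. The care points are merely (i) using the looser bound of Proposition~\ref{prop:degree} so that $\sqrt{\alpha\beta}$, rather than $\sqrt{\alpha(1-\alpha)\beta(1-\beta)}$, appears and the substitution from step one goes through cleanly, and (ii) invoking the hypothesis $\gamma_\graph\le\sqrt{\zeta_A\zeta_B}$ exactly once, to guarantee that the factor $\sqrt{\zeta_A\zeta_B}-\gamma_\graph$ in the final lower bound on $|E'|$ is nonnegative (without this condition, the resulting inequality would be vacuous and the contradiction would not close).
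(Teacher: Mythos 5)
Your proposal is correct and follows essentially the same route as the paper's own proof: argue by contradiction, apply Proposition~\ref{prop:degree} (in its looser form, with $\sqrt{\alp\bet}$) to the subgraph spanned by the core, and combine it with the core degree conditions $|E'|\ge \zeta_A\Delta|A'|$ and $|E'|\ge \zeta_B\Delta|B'|$. The only difference is in the final algebra: the paper splits into two cases according to whether $\alp\zeta_A\ge\bet\zeta_B$ and bounds $\bet$ (resp.\ $\alp$) from below before recovering the bound on $|E'|$, whereas you substitute both degree bounds into the mixing inequality simultaneously and divide by $|E'|>0$ (legitimate since the coefficients $1-\gamma_\graph$ and $\gamma_\graph$ are nonnegative and a nonempty core is implicit in both arguments), which reaches $|E'|\ge\frac{\zeta_A\zeta_B-\gamma_\graph\sqrt{\zeta_A\zeta_B}}{1-\gamma_\graph}\,\Delta n$ a bit more directly.
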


The proof of this theorem is along the same lines as the proof of Theorem~3.1 in~\cite{Roth-Skachek}. 
For the sake of completeness of the presentation, we place the sketch of the proof in Appendix. 

The main result of this section follows from Corollary~\ref{thrm:core} and Theorem~\ref{thrm:core-size}, 
and it appears in the following corollary. 

\begin{corollary}
If the size of error in $\bldy$ is less than
$$
\frac{\delta_A \delta_B / 16 - \gamma_\graph \sqrt{\delta_A \delta_B / 16}}{1 - \gamma_\graph} \cdot \Delta n \; , 
$$ 
and $\gamma_\graph \le \quarter \sqrt{\delta_A \delta_B}$, then the LP decoder in Figure~\ref{fig:lp-primal}
will correct all errors in $\bldy$.  
\end{corollary}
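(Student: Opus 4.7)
The plan is to simply specialize Theorem~\ref{thrm:core-size} to the pair $(\zeta_A, \zeta_B) = (\quarter \delta_A, \quarter \delta_B)$ and then invoke the contrapositive of Corollary~\ref{thrm:core}. So I would set $\zeta_A = \quarter \delta_A$ and $\zeta_B = \quarter \delta_B$, and observe that the product $\zeta_A \zeta_B$ equals $\delta_A \delta_B / 16$ and its square root equals $\quarter \sqrt{\delta_A \delta_B}$.

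With this substitution, the hypothesis $\gamma_\graph \le \sqrt{\zeta_A \zeta_B}$ required by Theorem~\ref{thrm:core-size} becomes exactly the assumed bound $\gamma_\graph \le \quarter \sqrt{\delta_A \delta_B}$, and the upper bound on the number of errors appearing in Theorem~\ref{thrm:core-size} becomes precisely
\[
\frac{\delta_A \delta_B / 16 - \gamma_\graph \sqrt{\delta_A \delta_B / 16}}{1 - \gamma_\graph} \cdot \Delta n \; ,
\]
which is the hypothesis of the corollary. Therefore Theorem~\ref{thrm:core-size} guarantees that the graph $\graph$ contains no $(\quarter \delta_A, \quarter \delta_B)$-error core associated with $\bldy$.

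Finally, Corollary~\ref{thrm:core} states that whenever the LP decoder in Figure~\ref{fig:lp-primal} fails, such a $(\quarter \delta_A, \quarter \delta_B)$-error core must exist. Taking the contrapositive, the absence of such an error core forces the LP decoder to succeed and hence to correct all errors in $\bldy$.

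There is essentially no obstacle here: the statement is a direct composition of the two previously proved results, and the only verification needed is the trivial algebraic check that the numbers $\zeta_A \zeta_B = \delta_A \delta_B / 16$ and $\sqrt{\zeta_A \zeta_B} = \quarter \sqrt{\delta_A \delta_B}$ turn the generic bound of Theorem~\ref{thrm:core-size} into the specific bound stated in the corollary. No further combinatorial or optimization argument is required at this stage.
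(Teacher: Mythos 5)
Your proposal is correct and follows exactly the paper's route: the corollary is stated as an immediate consequence of Corollary~\ref{thrm:core} and Theorem~\ref{thrm:core-size}, obtained by setting $\zeta_A = \quarter\delta_A$ and $\zeta_B = \quarter\delta_B$ and taking the contrapositive of Corollary~\ref{thrm:core}. The algebraic check you give ($\zeta_A\zeta_B = \delta_A\delta_B/16$, $\sqrt{\zeta_A\zeta_B} = \quarter\sqrt{\delta_A\delta_B}$) is all that is needed.
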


Observe, that the proposed LP decoder corrects any error pattern of size approximately $\delta_A \delta_B \Delta n / 16$, 
when the value of $\Delta$ is large enough.

\section{Using Error Pattern Orientation}
\label{sec:error-orientation}

In this section, we present more powerful decoder analysis than its counterpart 
in Section~\ref{sec:constant-fraction}. More specifically, by using \emph{error pattern orientation}, 
we are able to improve the fraction of correctable errors in Section~\ref{sec:constant-fraction} 
by approximately a factor of $4$. The idea of using error pattern orientation was 
proposed in~\cite{Feldman-capacity}. 

Let $\graph = (A \cup B, E)$ be a $\Delta$-regular bipartite graph as before, and 
let $\cH = (U_A \cup  U_B, \Edge)$ be a subgraph with $U_A \subseteq A$, $U_B \subseteq B$ and $\Edge \subseteq E$.
We start with the following definition. 




{\em Definition:} 
The assignment of the directions to the edges of the subgraph 
$\cH = (U_A \cup  U_B, \Edge)$ is called an \emph{$(\rho_A, \rho_B)$-orientation} (for some $\rho_A, \rho_B \in (0, 1]$) 
if each vertex $v \in U_A$ and each vertex $v \in U_B$ has at most $\rho_A \Delta$ and 
$\rho_B \Delta$ incoming edges in $\Edge$, respectively. 
We will say that for the given 
assignment of the edge directions, $M$ edges are \emph{violating the 
$(\rho_A, \rho_B)$-orientation property at the vertex $v \in U_A$ ($v \in U_B$)} 
if $v$ has $\rho_A \Delta + M$ ($\rho_B \Delta + M$, respectively) incoming edges in $\Edge$.
We will also say that for the given 
assignment of the edge directions, $M$ edges are \emph{violating the 
$(\rho_A, \rho_B)$-orientation property in $\cH$} if $M$ is the smallest integer such that by removing $M$ edges from 
$\Edge$, the resulting $\cH$ will have a $(\rho_A, \rho_B)$-orientation.

\begin{lemma}
Let $\cH = (U_A \cup U_B , \Edge)$ be a subgraph of $\graph = (A \cup B, E)$
with $U_A \subseteq A$, $U_B \subseteq B$ and $\Edge \subseteq E$.
Assume that 
$$
|\Edge| \le \frac{\betaf \alphaf - \gamma_\graph \sqrt{\betaf \alphaf}}{1 - \gamma_\graph} \cdot \Delta n \; , 
$$ 
for some $\betaf, \alphaf \in (0, 1]$, such that $\gamma_\graph \le \sqrt{\betaf \alphaf}$, and $\half \betaf \Delta$, $\half \alphaf \Delta$ are 
both integers. 
Then, $\Edge$ contains an $(\betaf/2, \alphaf/2)$-orientation.  
\label{lemma:orientation}
\end{lemma}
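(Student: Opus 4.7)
The plan is to reduce the orientation question to a Hall-type inequality on induced edge counts, and then verify that inequality using Proposition~\ref{prop:degree} together with AM-GM. First I would encode the orientation problem as a max-flow on a bipartite network: a unit supply at each $e \in \Edge$, unit-capacity arcs from $e$ to each of its two endpoints, and absorption capacities $\half \betaf \Delta$ and $\half \alphaf \Delta$ at the vertices of $U_A$ and $U_B$ respectively. Max-flow equals $|\Edge|$ iff an $(\betaf/2,\alphaf/2)$-orientation of $\Edge$ exists, and the integrality hypothesis on $\half\betaf\Delta$ and $\half\alphaf\Delta$ together with integrality of max flow ensures the optimum yields an integer orientation. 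Inspecting a minimum $(s,t)$-cut, it is optimal to place an edge-node $v_e$ on the source side precisely when both endpoints of $e$ are, so the min-cut capacity equals $|\Edge| - |\{e \in \Edge \;:\; e \subseteq R_A \cup R_B\}| + \half\betaf\Delta|R_A| + \half\alphaf\Delta|R_B|$, where $R_A \subseteq U_A$ and $R_B \subseteq U_B$ are the vertices placed on the source side. Requiring this to be $\ge |\Edge|$ yields the Hall-type criterion
\[
|\{e \in \Edge \;:\; e \subseteq R_A \cup R_B\}| \;\le\; \half \betaf \Delta |R_A| + \half \alphaf \Delta |R_B|, \qquad R_A \subseteq U_A,\; R_B \subseteq U_B. \qquad (\star)
\]

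Next I would set $|R_A| = \alp n$ and $|R_B| = \bet n$. Applying Proposition~\ref{prop:degree} to the induced subgraph of $\graph$ on $R_A \cup R_B$, together with the trivial bound $|\{e \in \Edge \;:\; e \subseteq R_A \cup R_B\}| \le |\Edge|$, gives
\[
|\{e \in \Edge \;:\; e \subseteq R_A \cup R_B\}| \;\le\; \min\Bigl(\bigl((1-\gamma_\graph)\alp\bet + \gamma_\graph\sqrt{\alp\bet}\bigr)\Delta n,\; |\Edge|\Bigr).
\]
Let $\theta := (\sqrt{\betaf\alphaf} - \gamma_\graph)/(1-\gamma_\graph)$; the lemma hypothesis rewrites as $|\Edge|/(\Delta n) \le \sqrt{\betaf\alphaf}\cdot\theta$, and AM-GM gives $\half(\betaf\alp + \alphaf\bet) \ge \sqrt{\betaf\alphaf}\sqrt{\alp\bet}$. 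So it suffices to split on whether $\sqrt{\alp\bet}$ is above or below $\theta$. If $\sqrt{\alp\bet} \le \theta$, then
\[
(1-\gamma_\graph)\alp\bet + \gamma_\graph\sqrt{\alp\bet} \;=\; \sqrt{\alp\bet}\bigl((1-\gamma_\graph)\sqrt{\alp\bet} + \gamma_\graph\bigr) \;\le\; \sqrt{\alp\bet}\sqrt{\betaf\alphaf} \;\le\; \half(\betaf\alp + \alphaf\bet),
\]
so the expander-mixing bound establishes $(\star)$. If $\sqrt{\alp\bet} \ge \theta$, then $|\Edge|/(\Delta n) \le \sqrt{\betaf\alphaf}\cdot\theta \le \sqrt{\betaf\alphaf}\sqrt{\alp\bet} \le \half(\betaf\alp + \alphaf\bet)$, and the trivial bound establishes $(\star)$.

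The main obstacle is the max-flow/min-cut reduction in the first step: one has to identify the right flow network, verify the integer optimality carefully (using the integrality of the vertex capacities $\half\betaf\Delta$, $\half\alphaf\Delta$ so that integral max flow corresponds to a genuine orientation), and justify that the cut-optimal placement of edge-nodes leads to the Hall-type form $(\star)$. Once $(\star)$ is in place, the remainder is forced: the threshold $\theta$ is precisely the value at which the two upper bounds cross, and AM-GM aligns both of them against $\half(\betaf\alp + \alphaf\bet)$.
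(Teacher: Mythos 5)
Your proof is correct, but it takes a genuinely different route from the paper. The paper never invokes max-flow/min-cut or a Hall-type criterion explicitly: it fixes an orientation of $\Edge$ minimizing the number of violations, assumes a heavy vertex exists, and builds the closure $U$ of heavy vertices under incoming edges from full vertices; if a light vertex sends an edge into $U$, a directed path is flipped to reduce violations, and otherwise it counts the edges induced on $U$ and derives a contradiction with Proposition~\ref{prop:degree} through a case analysis on the ratio $\kappa$ of directed edges between the two sides, minimizing a quadratic in $\eta = \sqrt{\kappa}+\sqrt{1/\kappa}$ at $\kappa=1$. Your reduction to the deficiency condition $(\star)$ via an integral max flow (the integrality hypothesis on $\half\betaf\Delta,\half\alphaf\Delta$ playing exactly the role you assign it) is essentially the same augmenting-path mechanism made systematic: the paper's set $U$ is the source side of a min cut, and the paper only verifies the density inequality for that one extremal set, whereas you verify it for all $R_A,R_B$ at once. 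Where you genuinely diverge is in the verification itself: your two upper bounds (expander mixing versus the trivial bound $|\Edge|$), the AM--GM step $\half(\betaf\alp+\alphaf\bet)\ge\sqrt{\betaf\alphaf}\sqrt{\alp\bet}$, and the threshold split at $\theta=(\sqrt{\betaf\alphaf}-\gamma_\graph)/(1-\gamma_\graph)$ replace the paper's $\kappa$-dependent inequalities~(\ref{eq:edges-1})--(\ref{eq:edges-2}) and the monotonicity argument in $\eta$; your calculation is cleaner and avoids the optimization over $\kappa$, at the cost of importing max-flow/min-cut (or, equivalently, Hakimi's orientation theorem), while the paper's argument is self-contained and has an algorithmic, local-flipping flavor. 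The only points worth making explicit in your write-up are the degenerate cases $R_A=\emptyset$ or $R_B=\emptyset$ (where $(\star)$ holds trivially since $\graph$ is bipartite, and Proposition~\ref{prop:degree} requires $\alp+\bet>0$) and the fact that $\theta\ge 0$ follows from the hypothesis $\gamma_\graph\le\sqrt{\betaf\alphaf}$.
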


\begin{proof} 
Assign directions to the edges in $\Edge$
such that the number of violations of an $(\betaf/2, \alphaf/2)$-orientation in $\cH$ is minimal. 
We will show that if for some $v \in U_A$ ($v \in U_B$) 
there are more than $\betaf \Delta /2$ ($\alphaf \Delta /2$, respectively) 
incoming edges, then it is possible to change 
the directions of the edges in the graph such that the number of edges violating the orientation property
will decrease. This will make a contradiction to the minimality of the number of 
orientation violations in the current assignment
of the edge directions. 

Denote by $\deg_{\mbox{\scriptsize in}} (v)$ the number of incoming edges (in $\cH$) of the vertex $v$. 
Recall that $\betaf \Delta$ and 
$\alphaf \Delta$ are even integers. We will use the following definitions.

{\em Definition:} A vertex $v \in U_A \cup U_B$ is called a \emph{heavy} vertex if it satisfies  
one of the following:
\begin{enumerate}
\item
$v \in U_A$ and $\deg_{\mbox{\scriptsize in}}(v) > \half \betaf \Delta$; 
\item
$v \in U_B$ and $\deg_{\mbox{\scriptsize in}}(v) > \half \alphaf \Delta$.
\end{enumerate}

{\em Definition:} A vertex $v \in U_A \cup U_B$ is called a \emph{full} vertex if it satisfies  
one of the following:
\begin{enumerate}
\item
$v \in U_A$ and $\deg_{\mbox{\scriptsize in}}(v) = \half \betaf \Delta$;
\item
$v \in U_B$ and $\deg_{\mbox{\scriptsize in}}(v) = \half \alphaf \Delta$.
\end{enumerate}

{\em Definition:} A vertex $v \in U_A \cup U_B$ is called a \emph{light} vertex if it satisfies  
one of the following:
\begin{enumerate}
\item
$v \in U_A$ and $\deg_{\mbox{\scriptsize in}}(v) < \half \betaf \Delta$;
\item
$v \in U_B$ and $\deg_{\mbox{\scriptsize in}}(v) < \half \alphaf \Delta$.
\end{enumerate}

Observe that the orientation property 
is not violated at the full and at the light vertices. 
Assume, by contrary, that there exists a heavy vertex in $U_A \cup U_B$. We show that 
it is possible to change the directions of the edges in $\Edge$ such that 
the total number of edges violating the orientation property in $\cH$ will decrease. 

Define a set of vertices $U$ to be the maximal set as follows: 
\begin{itemize} 
\item 
If $v \in U_A \cup U_B$ is heavy then $v \in U$. 
\item
If $u \in U_A \cup U_B$ is full and there is a direct edge from $u$ to $v$ for some $v \in U$, then $u \in U$. 
\end{itemize} 
The set $U$ is well defined. 

If there is an edge $(w, u)$ for some $w \notin U$ and $u \in U$, then $w$ is \emph{light} and 
there exists a path from $w$ to some \emph{heavy} vertex $v \in U$ (vertex $u$ can be full). 
Then, it is possible to flip the directions of all edges in the path, and thus to decrease the 
number of violations of the orientation property by $1$ (at the vertex $v$). 

Below, we assume that there is no edge $(w, u)$ for any $w \notin U$ and $u \in U$. 
Denote $U'_A = U \cap U_A$ and $U'_B = U \cap U_B$. Let $\Edge'$ be a set of edges in $\Edge$ 
having one endpoint in $U'_A$ and one endpoint in $U'_B$. Let $\alp = |U'_A|/n$ and $\bet = |U'_B|/n$. 
We have 
\begin{multline}
\half (\alp \betaf + \bet \alphaf) \Delta n  < |\Edge' | \le | \Edge | \\
\le \frac{\betaf \alphaf - \gamma_\graph \sqrt{\betaf \alphaf}}{1 - \gamma_\graph} \cdot \Delta n \; , 
\label{eq:edges-lwr-bnd}
\end{multline} 
where the first inequality is correct since there are only heavy and full vertices 
in $U'_A \cup U'_B$, and at least one of these vertices is heavy. 
The last inequality is given by the conditions of the lemma. 

Assume that 
the ratio between the number of directed edges in $\Edge'$ from $U'_A$ to $U'_B$ and 
the number of directed edges in $\Edge'$ from $U'_B$ to $U'_A$ is $\kappa > 0$. 
Then, 
\begin{multline}
\half \alp \betaf (1 + \kappa) \cdot \Delta n \le |\Edge' | \\
\le \left( (1 - \gamma_\graph) \alp \bet  + \gamma_\graph \sqrt{\alp \bet} \right) \Delta n \; ,
\label{eq:edges-1}
\end{multline}
and 
\begin{multline}
\half \bet \alphaf (1 + 1/\kappa) \cdot \Delta n \le |\Edge' | \\
\le \left( (1 - \gamma_\graph) \alp \bet + \gamma_\graph \sqrt{\alp \bet} \right) \Delta n \; ,
\label{eq:edges-2}
\end{multline}
where the left-hand side inequalities follow from the fact that every vertex in $U'_A$ and 
every vertex in $U'_B$ is either full or heavy, and the right-hand side inequalities follow from~(\ref{prop:av-degree}). 

Inequalities~(\ref{eq:edges-1}) and~(\ref{eq:edges-2}) yield
\begin{equation}
\bet \ge \frac{\betaf(1 + \kappa)}{2 (1 - \gamma_\graph)} - \frac{\gamma_\graph}{1 - \gamma_\graph} \sqrt{ \frac{\bet}{\alp}}
\; , 
\label{eq:bet-lower-bnd}
\end{equation}
and 
\begin{equation}
\alp \ge \frac{\alphaf(1 + 1/\kappa)}{2 (1 - \gamma_\graph)} - \frac{\gamma_\graph}{1 - \gamma_\graph} \sqrt{ \frac{\alp}{\bet}}
\; , 
\label{eq:alp-lower-bnd}
\end{equation}
respectively. 

Consider two cases.
\begin{itemize}
\item[Case 1:]
$\alp \betaf (1 + \kappa) \ge \bet \alphaf (1 + 1/\kappa)$. 
Then, from~(\ref{eq:bet-lower-bnd}) we have
\[
\bet \ge \frac{\betaf(1 + \kappa)}{2(1 - \gamma_\graph)} - \frac{\gamma_\graph}{1 - \gamma_\graph} \sqrt{\frac{\betaf (1 + \kappa)}{\alphaf (1 + 1/\kappa)}} \; ,
\]
and, so,  
\[
\bet \alphaf \ge \frac{\betaf \alphaf (1 + \kappa)}{2(1 - \gamma_\graph)} - \frac{\gamma_\graph}{1 - \gamma_\graph} 
\sqrt{\betaf \alphaf \kappa} \; . 
\]
Finally, 
\begin{eqnarray*}
\alp \betaf & \ge & \bet \alphaf \frac{1 + 1/\kappa}{1 + \kappa} \\
& \ge & \frac{\betaf \alphaf (1 + 1/\kappa)}{2(1 - \gamma_\graph)} - \frac{\gamma_\graph}{1 - \gamma_\graph} 
\sqrt{\frac{\betaf \alphaf}{\kappa}} \; . 
\end{eqnarray*}
\item[Case 2:]
$\alp \betaf (1 + \kappa) < \bet \alphaf (1 + 1/\kappa)$. 
Then, from~(\ref{eq:alp-lower-bnd}) we have
\[
\alp > \frac{\alphaf(1 + 1/\kappa)}{2(1 - \gamma_\graph)} - \frac{\gamma_\graph}{1 - \gamma_\graph} 
\sqrt{\frac{\alphaf (1 + 1/\kappa)}{\betaf (1 + \kappa)}} \; ,
\]
and, so,  
\[
\alp \betaf > \frac{\betaf \alphaf (1 + 1/\kappa)}{2(1 - \gamma_\graph)} - \frac{\gamma_\graph}{1 - \gamma_\graph} 
\sqrt{\frac{\betaf \alphaf}{\kappa}} \; . 
\]
We also obtain: 
\begin{eqnarray*}
\bet \alphaf & > & \alp \betaf \frac{1 + \kappa}{1 + 1/\kappa} \\
& > & \frac{\betaf \alphaf (1 + \kappa)}{2(1 - \gamma_\graph)} - \frac{\gamma_\graph}{1 - \gamma_\graph} 
\sqrt{\betaf \alphaf\kappa} \; . 
\end{eqnarray*}
\end{itemize} 

From~(\ref{eq:edges-lwr-bnd}), in both cases we have: 
\begin{eqnarray}
|\Edge | & > & \frac{1}{2} (\alp \betaf + \bet \alphaf) \Delta n \nonumber \\
& \ge & \frac{1}{2} \Bigg( \frac{\betaf \alphaf (2 + \kappa + 1/\kappa)}{2(1 - \gamma_\graph)} \nonumber \\
&& \hspace{-1ex} - \; \frac{\gamma_\graph 
\sqrt{\betaf \alphaf}}{1 - \gamma_\graph} \left( \sqrt{\kappa} + \sqrt{\frac{1}{\kappa}} \right)\Bigg) \Delta n \, . 
\label{eq:eta}
\end{eqnarray}

Denote 
\[
\eta = \sqrt{\kappa} + \sqrt{1/\kappa}\;, \quad \eta \in [2, + \infty) \; . 
\]
Observe that the right-hand side of~(\ref{eq:eta}) is a quadratic function of $\eta$. 
Since $\gamma_\graph \le \sqrt{\betaf \alphaf}$, we have that 
this function is nonnegative and monotonic increasing for $\eta \ge 2 \gamma_\graph / \sqrt{\betaf \alphaf}$. 
Its minimum is obtained for the smallest value of $\eta$, which 
is achieved at $\kappa = 1$. Therefore,~(\ref{eq:eta}) becomes     
\[
|\Edge | > \frac{ \betaf \alphaf - \gamma_\graph 
\sqrt{\betaf \alphaf}}{1 - \gamma_\graph} \cdot \Delta n \; .
\]
We obtained a contradiction to the right-hand side of~(\ref{eq:edges-lwr-bnd}).

The contradiction follows from the assumption that there exists a heavy vertex in $U_A \cup U_B$,
and it is impossible to flip the directions of the edges such that the number of 
violations of the orientation property will decrease. 
We conclude that there is an $(\betaf/2, \alphaf/2)$-orientation in $\Edge$. 
\end{proof} 

Define the numbers $\theta_A$  and $\theta_B$ as follows. Let $\theta_A > 0$ ($\theta_B > 0$) be the largest number such that $\theta_A < \delta_A$ ($\theta_B < \delta_B$)
and $\quarter \theta_A \Delta$ ($\quarter \theta_B \Delta$, respectively) is integer.

The following theorem is the main result of this paper. 

\begin{theorem}
Let $\Code$ be defined as above, and assume that $\gamma_\graph \le \half \sqrt{\theta_A \theta_B}$. 
Then, the decoder in Figure~\ref{fig:lp-primal} is able to correct any error pattern 
of a size less than or equal to 
$$
\frac{\theta_A \theta_B - 2 \gamma_\graph \sqrt{\theta_A \theta_B}}{4 (1 - \gamma_\graph)} \cdot \Delta n
$$ 
in a codeword $\bldc \in \Code$. 
\label{thrm:more-errors-correct}
\end{theorem}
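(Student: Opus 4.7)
The plan is to upgrade the Section~\ref{sec:constant-fraction} argument by replacing the error-core analysis (Lemma~\ref{lemma:main-2}) with the stronger orientation bound of Lemma~\ref{lemma:orientation}. Let $E' = \{e \in E \,:\, y_e \neq c_e\}$ be the set of erroneous edges, let $U_A \subseteq A$ and $U_B \subseteq B$ be their endpoints, and set $\cH = (U_A \cup U_B, E')$. Choosing $\betaf = \theta_A/2$ and $\alphaf = \theta_B/2$, the hypothesis $\gamma_\graph \le \half \sqrt{\theta_A \theta_B}$ becomes $\gamma_\graph \le \sqrt{\betaf \alphaf}$; the quantities $\half\betaf\Delta = \quarter\theta_A\Delta$ and $\half\alphaf\Delta = \quarter\theta_B\Delta$ are integers by the definition of $\theta_A,\theta_B$; and the bound assumed on $|E'|$ coincides exactly with the hypothesis $|\Edge| \le (\betaf\alphaf - \gamma_\graph\sqrt{\betaf\alphaf})/(1-\gamma_\graph) \cdot \Delta n$ of Lemma~\ref{lemma:orientation}. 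Applying that lemma produces a $(\theta_A/4,\theta_B/4)$-orientation of the edges of $\cH$.

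Next, I would construct a feasible point of the open polytope $\cP$ from Figure~\ref{fig:lp-dual-slack} by adapting the template of Lemma~\ref{lemma:main-1}. For every correct edge the assignment from Table~\ref{table:tau} is copied verbatim. For each error edge $e = \{v,u\} \in E'$, use the orientation to declare the head of $e$ (say $v$) to be the ``charged'' endpoint, and then set $\tau_{v,e}^{(c_e)} = \tau_{u,e}^{(c_e)} = \half$, $\tau_{v,e}^{(\alpha)} = -\fhalf - \epsilon$ and $\tau_{u,e}^{(\alpha)} = \thalf$ for every $\alpha \neq c_e$. By the same arithmetic used in Lemma~\ref{lemma:main-1}, the edge inequalities~(\ref{LP-polytope-1}) and~(\ref{LP-polytope-1-1}) are satisfied for any sufficiently small $\epsilon > 0$.

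The technical heart of the argument is then the vertex inequality~(\ref{LP-polytope-2}); as in the proof of Lemma~\ref{lemma:main-1}, for each $v \in V$ and $\bldb \in \code(v)$ one partitions the incident edges of $v$ into $\cE_1,\cE_2,\cE_3,\cE'_4,\cE''_4$ and reduces the claim to
\[
|\cE_2| + |\cE''_4| \ge 3|\cE'_4| + \epsilon(|\cE_2|+|\cE'_4|).
\]
The crucial new ingredient is that $\cE'_4$ now consists precisely of the error edges at which $v$ is the charged (head) endpoint, so the $(\theta_A/4,\theta_B/4)$-orientation property gives $|\cE'_4| \le \quarter\theta_A\Delta$ if $v \in A$ and $|\cE'_4| \le \quarter\theta_B\Delta$ if $v \in B$. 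When $\bldb = (\bldc)_{\!\scriptscriptstyle E(v)}$ the sets $\cE_2,\cE'_4,\cE''_4$ are all empty and the inequality is trivial; when $\bldb \neq (\bldc)_{\!\scriptscriptstyle E(v)}$, the local minimum distance of $\code(v)$ yields $|\cE_2|+|\cE'_4|+|\cE''_4| \ge \delta\Delta$ (with $\delta \in \{\delta_A,\delta_B\}$ as appropriate). Since $\theta_A < \delta_A$ and $\theta_B < \delta_B$, combining the two bounds gives $|\cE_2|+|\cE''_4| \ge \delta\Delta - |\cE'_4| \ge 3|\cE'_4| + \epsilon'\Delta$ for $\epsilon' = (\delta - \theta)/4 > 0$, which closes~(\ref{LP-polytope-2}) once $\epsilon \le \epsilon'$ is chosen. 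Proposition~\ref{prop:unique} then delivers decoder success.

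The main obstacle, and the only genuine novelty beyond Section~\ref{sec:constant-fraction}, is ensuring that the charged endpoint of every error edge sees sufficiently few incoming error edges in the dual witness, since the cruder peeling argument of Section~\ref{sec:constant-fraction} is unable to exploit both sides of the bipartition simultaneously. This is exactly what Lemma~\ref{lemma:orientation} supplies, and the factor-of-$4$ improvement in the correctable fraction arises because the orientation lemma bounds incoming degree through a single spectral inequality rather than a two-sided recursive peeling, gaining one factor of $2$ on each side of the graph.
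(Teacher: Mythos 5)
Your proposal is correct and follows essentially the same route as the paper: invoke Lemma~\ref{lemma:orientation} with $\betaf = \theta_A/2$, $\alphaf = \theta_B/2$ to obtain a $(\theta_A/4,\theta_B/4)$-orientation of the error edges, then build exactly the dual witness of Table~\ref{table:tau} with the $-\fhalf-\epsilon$ value placed at the head of each oriented error edge, and verify~(\ref{LP-polytope-2}) via the $\cE_1,\ldots,\cE''_4$ partition of Lemma~\ref{lemma:main-1}, using the orientation to bound $|\cE'_4|$. The only difference is that you spell out the arithmetic (e.g.\ the choice $\epsilon' \le \delta-\theta$) that the paper delegates to ``the same argument as in Lemma~\ref{lemma:main-1}''.
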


\begin{proof} 
Let $\Edge$ be the set of edges in error (for a received word $\bldy$), and assume 
that 
\[
|\Edge| \le \frac{\theta_A \theta_B - 2 \gamma_\graph \sqrt{\theta_A \theta_B}}{4 (1 - \gamma_\graph)} \cdot \Delta n \; . 
\]
Then, by Lemma~\ref{lemma:orientation}, there exists an $(\theta_A/4, \theta_B/4)$-orientation of $\Edge$.  

Therefore, 
we are able to construct a feasible solution for the dual LP problem, as follows.  
\begin{itemize}
\item
For the edges $e \notin \Edge$, we set the values of $\tau_{v,e}^{(\alpha)}$ in the same way as
we set the values of $\tau_{v,e}^{(\alpha)}$ for $e \notin E_1$
in the proof of Lemma~\ref{lemma:main-1}. 
\item
For the (directed) edge $(u,v) \in \Edge$, we set 
\[
\forall \alpha \in \ff \backslash \{c_e\} \; : \; 
\tau_{v,e}^{(\alpha)} = - \fhalf - \epsilon \; \mbox{ and } \; \tau_{u,e}^{(\alpha)} = \thalf \; , 
\]
and 
\[
\tau_{u,e}^{(c_e)} = \tau_{v,e}^{(c_e)} = \half \;.  
\]
\end{itemize}

These settings clearly satisfy all the constraints~(\ref{LP-polytope-1}) and~(\ref{LP-polytope-1-1}).  
Moreover, since for every $v \in A$ ($v \in B$) there are less than 
$\quarter \delta_A \Delta$ ($\quarter \delta_B \Delta$, respectively) incident edges $e \in \Edge$ with  
the corresponding $\tau_{v,e}^{(\alpha)} = -\fhalf - \epsilon$, using the same argument as in Lemma~\ref{lemma:main-1}, 
for $\epsilon$ small enough, we have that~(\ref{LP-polytope-2}) is also satisfied. 
\end{proof} 

\section{Discussion}
\label{sec:discussion}

The relative minimum distance of the code $\Code$ was shown in~\cite{Roth-Skachek} to satisfy~(\ref{eq:min-distance}). 
By taking a sufficiently large $\Delta$, this bound can be made arbitrarily 
close to $\delta_A \delta_B$. Thus, the analysis in Section~\ref{sec:constant-fraction} 
demonstrates that the decoder in Figure~\ref{fig:lp-primal} is able to correct any error 
pattern of size approximately $\ost$ of this lower bound. 
For comparison, the analysis in Section~\ref{sec:error-orientation} shows that the decoder  
is actually able to correct approximately four times more errors, than it was shown in Section~\ref{sec:constant-fraction}. Consequently, the
fraction of  correctable errors under the decoder in Figure~\ref{fig:lp-primal} is (at least) 
approximately $\quarter \delta_A \delta_B$. 

It is interesting to compare this result with other related works. Thus, in~\cite{Zemor01} 
the code $\Code$ with $\delta_A = \delta_B = \delta$ (for $0 < \delta < 1$) was considered, and a bit-flipping decoder 
was presented. This decoder corrects approximately $\quarter \cdot \delta^2$ fraction of errors. 
Similar result for binary codes was also obtained in~\cite{Feldman-capacity} by using a \emph{linear-programming} decoder 
and a slightly different definition of expander graph. 

However, the fraction of correctable errors in $\Code$ can be boosted close to $\half \delta_A \delta_B$ 
by using more advanced decoding techniques~\cite{Zemor03},~\cite{Roth-Skachek},~\cite{Skachek-Roth-2003}. 
It is still an open question whether the similar fraction of errors can be corrected by using decoder
based on linear-programming methods. 

The fraction of correctable errors grows with the size of the alphabet (as well as the relative minimum distance does). 
For example, consider a binary code $\Code$ having the same constituent code $\code = \code(v)$ for each $v \in V$.
If $\code$ is a random code of relative minimum distance $\delta$ and rate $r$, then we have (with high probability)
\[
        r \ge 1 - \entropy_2(\delta) - o(1) \; ,
\] 
where $\entropy_2(\cdot)$ is the binary entropy function. 
The rate of $\Code$ is at least $2r - 1$ and the fraction of the correctable errors is arbitrarily close to 
$\quarter \cdot \delta^2$. In Table~\ref{tab:binary}, we present the relations between the code rate and 
the lower bound on the fraction of correctable errors. 

\begin{table*}[htb]
	\centering
		\begin{tabular}{||l||c|c|c|c|c|c|c|c|c||}
			\hline
			\hline
			\mbox{Rate of $\Code$} & 0.1 & 0.2 & 0.3 & 0.4 & 0.5 & 0.6 & 0.7 & 0.8 & 0.9\\
			\hline
			\mbox{Fraction of correctable errors, $\times 10^{-4}$} & 22.14 & 15.76 & 10.82 & 7.086 & 4.346 & 2.422 & 1.160 & 0.4217 & 0.0786\\
			\hline
			\hline
		\end{tabular}
	\caption{Lower bound on the fraction of correctable errors for various rates of $\Code$, for binary alphabet.}
	\label{tab:binary}
\end{table*}

Next, consider a code $\Code$ over a large alphabet. Take $\code = \code(v)$ (for each $v \in V$) 
to be Generalized Reed-Solomon code of relative minimum distance $\delta$ and rate $r \ge 1 - \delta$.
In this case, we also have to require that $q \ge \Delta$. 
Table~\ref{tab:large} presents the relations between the rate of such $\Code$ and 
the fraction of correctable errors. 

\begin{table*}[htb]
	\centering
		\begin{tabular}{||l||c|c|c|c|c|c|c|c|c||}
			\hline
			\hline
			\mbox{Rate of $\Code$} & 0.1 & 0.2 & 0.3 & 0.4 & 0.5 & 0.6 & 0.7 & 0.8 & 0.9\\
			\hline
			\mbox{Fraction of correctable errors, $\times 10^{-2}$} & 5.0625 & 4.0 & 3.0625 & 2.250 & 1.5625 & 1.0 & 0.5625 & 0.250 & 0.0625\\
			\hline
			\hline
		\end{tabular}
	\caption{Lower bound on the fraction of correctable errors for various rates of $\Code$, for large alphabet.}
	\label{tab:large}
\end{table*}

\appendix

{\sl Sketch of the proof of Theorem~\ref{thrm:core-size}.}

Assume, by contrary, that $\graph$ contains a $(\zeta_A, \zeta_B)$-error core associated with $\bldy$. 
Let $E' \subseteq E$ be the set of edges in this error core, 
and $A' \subseteq A$ and $B' \subseteq B$ such that $A' \cup B'$ is the set of all the endpoints of the edges in $E'$. 
We have
\begin{itemize}
\item
for any $v \in A'$: $| \{ E(v) \cap E' \} | \ge \zeta_A \Delta$;
\item
for any $v \in B'$: $| \{ E(v) \cap E' \} |  \ge \zeta_B \Delta$. 
\end{itemize}

Consider a subgraph $\cH = (U_A \cup  U_B, \Edge)$ of $\graph$ with 
$U_A = A'$, $U_B = B'$ and $\Edge = E'$. Let $\alp = |U_A|/|A|$ and $\bet = |U_B|/|B|$. 
From Proposition~\ref{prop:degree}, we have
\begin{equation}
|E'| \le  ((1 - \gamma_\graph) \alp \bet 
+ \gamma_\graph \sqrt{\alp \bet}) \Delta n \; . 
\label{eq:error-core-bound}
\end{equation}

On the other hand, since $E'$ is the set of edges of an $(\zeta_A, \zeta_B)$-error core, 
we have 
\begin{equation}
|E'| \ge \alp n \cdot \zeta_A \Delta \quad \mbox{ and } \quad |E'| \ge \bet n \cdot \zeta_B \Delta \; . 
\label{eq:bound-E}
\end{equation}

There are two possibilities: 
\begin{description}
\item{Case 1:} $\alp \zeta_A \ge \bet \zeta_B$. Then, from~(\ref{eq:error-core-bound}) and~(\ref{eq:bound-E}), we have 
\[
\alp \zeta_A \le  ((1 - \gamma_\graph) \alp \bet + \gamma_\graph \sqrt{\alp \bet})\; , 
\]  
and so 
\[
\bet \ge \frac{\zeta_A - \gamma_\graph \sqrt{\bet/\alp}}{1 - \gamma_\graph} \ge \frac{\zeta_A - \gamma_\graph \sqrt{\zeta_A/\zeta_B}}{1 - \gamma_\graph} \; . 
\]
\item{Case 2:}
$\alp \zeta_A < \bet \zeta_B$. Then, from~(\ref{eq:error-core-bound}) and~(\ref{eq:bound-E}), similarly we have
\[
\alp \ge \frac{\zeta_B - \gamma_\graph \sqrt{\zeta_B/\zeta_A}}{1 - \gamma_\graph} \; . 
\]
\end{description}
In both cases, 
\[
|E'| \ge \frac{\zeta_A \zeta_B - \gamma_\graph \sqrt{\zeta_A \zeta_B}}{1 - \gamma_\graph} \cdot \Delta n \; ,
\]
in contradiction with the assumption. This concludes the proof.

\section*{Acknowledgments} 

The author wishes to thank Marcus Greferath.


\begin{IEEEbiographynophoto}{Vitaly Skachek}
was born in Kharkov, Ukraine (former USSR), in 1973. He 
received the B.A. (Cum Laude), M.Sc. and Ph.D. degrees 
in computer science from the Technion---Israel Institute 
of Technology, in 1994, 1998 and 2007, respectively.  

During 1996--2002, he held various engineering positions. 
In the period 2002--2006, he has been working toward the Ph.D. 
degree at the Computer Science Department at the Technion.
In the summer of 2004, he visited the Mathematics of 
Communications Department at Bell Laboratories under
the DIMACS Special Focus Program in Computational 
Information Theory and Coding. During 2007--2009, Dr. Skachek was 
a postdoctoral fellow with the Claude Shannon Institute
and the School of Mathematical Sciences, University 
College Dublin. He is now a research fellow with 
the School of Physical and Mathematical Sciences,
Nanyang Technological University, Singapore. 

Dr.~Skachek is a recipient of the Permanent Excellent 
Faculty Instructor award, given by Technion. 
\end{IEEEbiographynophoto}

\end{document}